\DeclareSymbolFontAlphabet{\mathbb}{AMSb}
\DeclareSymbolFontAlphabet{\mathbbl}{bbold}
\newlength{\oldparindent}
\newcommand{\superimpose}[2]{%
	{\ooalign{$#1\@firstoftwo#2$\cr\hfil$#1\@secondoftwo#2$\hfil\cr}}}
\renewcommand{\liminf}[1]{\underset{#1}{\underline{\operatorname{lim}}\,}}
\renewcommand{\limsup}[1]{\underset{#1}{\overline{\operatorname{lim}}\,}}
\newcommand{\rr}{{\mathbb{R}}}
\newcommand{\rrflex}[1]{{\ensuremath{\rr^{#1}
}}}
\newcommand{\rrD}{{\rrflex{D}}}
\newcommand{\rrd}{{\rrflex{d}}}
\newcommand{\argmin}[1]{{\ensuremath{\underset{#1}{\operatorname{argmin}}}}}
\newcommand{\arginf}[1]{\ensuremath{
		\underset{#1}{
			\operatorname{arginf}
		}
}}
\newcommand{\ee}{{\mathbb{E}}}
\newcommand{\eep}[1]{{\mathbb{E}_{\pp}\left[#1\right]}}	
\newcommand{\eecond}[2]{\ensuremath{{\ee_{\pp}\mleft[#1\middle|#2\mright]}}}
\newcommand{\eecondg}[2]{\ensuremath{{\ee_{\pp}^g\mleft[#1\middle|#2\mright]}}}
\newcommand{\eecondgp}[3]{\ensuremath{{\ee_{\pp}^{g,#3}\mleft[#1\middle|#2\mright]}}}
\newcommand{\LLp}[2]{{\mathbb{L}^{#2}_{\pp}\left(#1;\mmm\right)}}
\newcommand{\LLpgen}[3]{{\mathbb{L}^{#2}_{#3}\left(#1;\mmm\right)}}
\newcommand{\nn}{{\mathbb{N}}}
\newcommand{\pp}{{\mathbb{P}}}
\newcommand{\mmm}{{\mathscr{M}}}
\newcommand{\fff}{{\mathscr{F}}}
\newcommand{\gggg}{{\mathscr{G}}}
\newcommand{\ffft}{{\mathscr{F}_t}}
\newcommand{\gggt}{{\mathscr{G}_t}}
\newcommand{\exT}[1]{{#1}^{\mathfrak{e}:t}}
\newcommand{\exTflex}[2]{{#1}^{\mathfrak{e}:{#2}}}
\newcommand{\thetareg}[1]{{\ensuremath{
			\theta_t^{\mathfrak{ext}}
}}}
\newcommand{\thetarreg}[1]{{\ensuremath{
			\thetar{t}%
}}}
\newcommand{\EXP}[2]{{\ensuremath{\operatorname{Exp}_{#1}^g\left(#2\right)}}}
\newcommand{\LOG}[2]{{\ensuremath{\operatorname{Log}_{#1}^g\left(#2\right)}}}
\newcommand{\EXPps}[2]{{\ensuremath{\operatorname{EXP}_{#1}^g\left(#2\right)}}}
\newcommand{\LOGps}[2]{{\ensuremath{\operatorname{LOG}_{#1}^g\left(#2\right)}}}
\newcommand{\thetar}[1]{{
		\ensuremath{
			\theta_{#1}^{\mathfrak{ext}}
		}
}}
\newcommand{\GamLim}[1]{{
		\ensuremath{
			\underset{#1}{
				\operatorname{\Gamma-lim}\,	
			}
		}
}}
\newcommand*\bigcdot{\mathpalette\bigcdot@{.5}}
\newcommand*\bigcdot@[2]{\mathbin{\vcenter{\hbox{\scalebox{#2}{$\m@th#1\bullet$}}}}}
\newcommand\independent{\protect\mathpalette{\protect\independenT}{\perp}}
\def\independenT#1#2{\mathrel{\rlap{$#1#2$}\mkern2mu{#1#2}}}
\def\@chapter[#1]#2{\ifnum \c@secnumdepth >\m@ne
	\refstepcounter{chapter}%
	\typeout{\@chapapp\space\thechapter.}%
	\addcontentsline{toc}{chapter}%
	{\protect\numberline{\thechapter}\string\hypertarget{chap\thechapter}{#1}}%
	\else
	\addcontentsline{toc}{chapter}{#1}%
	\fi
	\chaptermark{#1}%
	\addtocontents{lof}{\protect\addvspace{10\p@}}%
	\addtocontents{lot}{\protect\addvspace{10\p@}}%
	\if@twocolumn
	\@topnewpage[\@makechapterhead{#2}]%
	\else
	\@makechapterhead{#2}%
	\@afterheading
	\fi}
\def\@makechapterhead#1{%
	\vspace*{50\p@}%
	{\parindent \z@ \raggedright \normalfont
		\ifnum \c@secnumdepth >\m@ne
		\huge\bfseries \@chapapp\space \thechapter
		\par\nobreak
		\vskip 20\p@
		\fi
		\interlinepenalty\@M
		\Huge \bfseries \hyperlink{chap\thechapter}{#1}\par\nobreak
		\vskip 40\p@
}}
\newcommand{\xireg}[1]{{\ensuremath{
			\xi_t^{\mathfrak{r}}
}}}
\newcommand{\xirreg}[1]{{\ensuremath{
			\xir{t}%
}}}
\newcommand{\xir}[1]{{
		\ensuremath{
			\xi_{#1}^{\mathfrak{r}}
		}
}}
\newtheoremstyle{dotless}{}{}{\itshape}{}{\bfseries}{}{ }{}
\theoremstyle{dotless}
\newtheorem{defn}{Definition}[section]
\newtheorem{ass}[defn]{Assumption}
\theoremstyle{plain}
\theoremstyle{definition}
\newtheorem{prop}[defn]{Proposition}%
\newtheorem{cor}[defn]{Corollary}%
\newtheorem{lem}[defn]{Lemma}%
\newtheorem{ex}[defn]{Example}%
\newtheorem{thrm}[defn]{Theorem}%
\newtheorem{remark}[defn]{Remark}%
\author{Anastasis Kratsios}
\author{Cody B. Hyndman}
\title{\Large \bf  Non-Euclidean Conditional Expectation and Filtering} %
\author{\textsc{Anastasis Kratsios}\thanks{Department of Mathematics and Statistics, Concordia University, 1455 Boulevard de Maisonneuve Ouest, Montr\'{e}al, Qu\'{e}bec, Canada H3G 1M8. %
		emails: anastasis.kratsios@mail.concordia.ca, cody.hyndman@concordia.ca
	} \quad
	\textsc{Cody B.\ Hyndman\footnotemark[1] 
	}\\
}
\date{%
	\today \\ %
}
\begin{document}
	\newpage
\maketitle
\numberwithin{equation}{section} 

\begin{abstract}
A non-Euclidean generalization of conditional expectation is introduced and characterized as the minimizer of  expected intrinsic squared-distance from a manifold-valued target.  The computational tractable formulation expresses the non-convex optimization problem as transformations of Euclidean conditional expectation.  This gives computationally tractable filtering equations for the dynamics of the intrinsic conditional expectation of a manifold-valued signal and is used to obtain accurate numerical forecasts of efficient portfolios by incorporating their geometric structure into the estimates.  
\end{abstract}

\noindent
{\itshape Keywords:} Conditional Expectation, Non-Euclidean Conditional Expectation, Portfolio Theory, Non-Linear Conditional Expectation, Non-Euclidean Geometry, Stochastic Filtering, Non-Euclidean Filtering Equations, Gamma-Convergence.  

\noindent
{\let\thefootnote\relax\footnotetext{This research was supported by the Natural Sciences and Engineering Research Council of Canada (NSERC).  The authors thank Alina Stancu (Concordia University) for helpful discussions.}}

\noindent
{\bf Mathematics Subject Classification (2010):} 60D05, 91G60, 91G10, 62M20, 60G35.

\section{Introduction}
Non-Euclidean geometry occurs naturally in problems in finance.  
Short-rate models consistent with finite-dimensional Heath-Jarrow-Morton (HJM) models are characterized using Lie group methods, in \cite{bjork2010interest}. In \cite{henry2005general,henry2008analysis}, highly accurate stochastic volatility model estimation methods are derived using Riemannian heat-kernel expansions.  In \cite{filipovic2001consistency}, the equivalent local martingale measures (ELMMs) of finite-dimensional term-structure models for zero-coupon bonds are characterized using the smooth manifold structure associated with factor models for the forward-rate curve.  In \cite{brody2004chaos}, information-geometric techniques for yield-curve modeling which
consider finite-dimensional manifolds of probability densities are developed.  %
In \cite{han2015geometric,han2016geometric}, Riemannian geometric approaches to stochastic volatility models and covariance matrix prediction are employed to successfully predicts stock prices.  In \cite{kratsios2017geometrics}, it is shown that considering a relevant geometric structures on a mathematical finance problem leads to more accurate out-of-sample forecasts.  The superior forecasting power of non-Euclidean methods is interpreted as encoding information present in mathematical finance problems which is otherwise overlooked by the classical Euclidean methods.  Each of these methodologies approach distinct problems in mathematical finance using differential geometry.

Conditional expectation and stochastic filtering are some of the most fundamental tools used in applied probability and finance.  Geometric formulations of conditional expectation, such as those used in \cite{snoussi2006particle,LeJanwAtanabeFilteringSalemSIAM} are solutions to non-convex optimization problems.  The non-convexity of the problem makes computation of these formulations of non-Euclidean conditional expectations difficult or intractable. 

Non-Euclidean filtering formulations such as those of \cite{FilteringDuncan}, \cite{ng1984nonlinear}, or \cite{Elworthygeofilt} assume that the signal and/or noise processes are non-Euclidean and estimate functionals of the noisy signal using the classical Euclidean conditional expectation.  In \cite{LeJanwAtanabeFilteringSalemSIAM} dynamics for the intrinsic conditional expectation of a manifold-valued signal was found, using the Le Jan-Watanabe connection.  This connection reduced the intrinsic non-Euclidean filtering problem to a Euclidean filtering problem.  However, the authors of \cite{LeJanwAtanabeFilteringSalemSIAM} remark that implementing their results may be intractable due to the added complexity introduced by the Le Jan-Watanabe connection.  

This paper presents an alternative computationally tractable characterization of intrinsic conditional expectation, called geodesic conditional expectation, and uses it to produce a computable solution to a non-Euclidean filtering problem similar to that of \cite{LeJanwAtanabeFilteringSalemSIAM}.  The implementation is similar to \cite{snoussi2006particle} for a non-Euclidean particle filter.  However, in \cite{snoussi2006particle} the convergence of the algorithm to the non-Euclidean conditional expectation is left unjustified.  The geodesic conditional expectation expresses the intrinsic conditional expectation as a limit of certain transformations of Euclidean conditional expectations associated to the non-Euclidean signal process.  Analogous to \cite{LeJanwAtanabeFilteringSalemSIAM}, these transformations reduce the computation of the non-Euclidean problem to the computation of a Euclidean problem, with the central difference being that the required transformations are available in closed form.  The infinitesimal linearization transformations considered here are similar to those empirically postulated in the engineering, computer-vision, and control literature in \cite{fletcher2004principal,fletcher2011geodesic,han2015geometric,hauberg2013unscented,allerhand2011robust,snoussi2006particle}.

The paper is organized as follows.  Section $2$ introduces the necessary notation and the general terminology through the paper.  In Section $3$, the relationship between portfolio selection and non-Euclidean geometry is introduced, and elements of Riemannian geometry are reviewed through the lens of the space of efficient portfolios.  Section $4$ introduces two natural generalizations of conditional expectation to the non-Euclidean setting.  Both formulations are shown to be equivalent in Theorem~\ref{thrm_main}.  Corollary~\ref{cor_Main_2} provides non-Euclidean filtering equations which describe the dynamics of the non-Euclidean expectations.  Using Corollary~\ref{cor_Main_2}, Section $5$ returns to the space of efficient portfolios and numerically illustrates how efficient portfolios, on historical stock data, can be more precisely forecasted by incorporating geometric features into the estimation procedure.  Our filtering procedure is benchmarked against other intrinsic filtering algorithms from the engineering and computer vision literature.  
Section $6$ reviews the contributions of this paper.  

\section{Preliminaries and Notation}
In this paper $(\Omega,\fff,\fff_{\bigcdot},\pp)$ denotes a complete stochastic base on which independent Brownian motions, denoted by $W_t$ and $B_t$ are defined.  Furthermore, $\gggg_{\bigcdot}$ will denote a sub-filtration of $\fff_{\bigcdot}$.  The vector-valued conditional expectation will be denoted by $\eecond{X_t}{\gggg}$.  

The measure $m$ %
will denote the Lebesgue %
measure, $L^p_{m}(\fff;\rrD)$ %
will denote the Bochner-Lebesgue spaces for $\fff$-measurable $\rrD$-valued functions with respect to the $D$-tuples of Lebesgue measure $m$%
.  If $D=1$, the Bochner-Lebesgue spaces will be abbreviated by $L^p_{m}(\fff)$%
.  For a Riemannian manifold $(\mmm,g)$, the intrinsic measure is denoted by $\mu_g$ and the induced distance function is denoted by $d_g$.  %
The disjoint union, or coproduct, of topological spaces will be denoted by $\coprod$.  The set of c\`{a}dl\`{a}g paths from $\rr$ into the metric space induced by $(\mmm,g)$, is defined by $D(\rr;\mmm,d_g)$.  

The next section motivates the geometries studied in this paper by introducing and discussing the geometry of efficient portfolios.  
\section{The Geometry of Efficient Portfolios}\label{ss_GEOEFFPORT}
A fundamental problems in mathematical finance is choosing an optimal portfolio.  %
Typically, in modern portfolio theory, a portfolio is comprised of $D$ predetermined risky assets and a riskless asset.  Efficient portfolios are portfolios having the greatest return but not exceeding a fixed level of risk.  Classically, the return level is measured by the portfolio's expected (log)-returns.  The portfolio's risk is quantified as the portfolio's variance.  The optimization problem defining efficient portfolios may be defined in a number of ways, the one considered in this paper is the following Sharpe-type ratio
\begin{equation}
\hat{w}(\gamma,\mu,\Sigma)\triangleq 
\underset{
	\underset{
		\bar{1}^{\star}w=1
	}{w \in \rr^D}
}{
	\operatorname{argmin}}
\left(
-\gamma\mu^{\star}w +\frac{
	w^{\star}\Sigma w
}{2}
\right)
.
\label{eq_defn_Sharp_ratio}
\end{equation}
Here $w$ is the vector of portfolio weights expressed as the proportion of wealth invested in each risky asset, $\mu \in \rr$ is the vector of the expected log-returns of the risky assets, $\Sigma$ is the covariance matrix of those log-returns, $\gamma$ is a parameter balancing the objectives of maximizing the portfolio return versus minimizing the portfolio variance, $\bar{1}$ is the vector with all its components equal to $1$, and $^\star$ indicates matrix transpose operation.  
If $\Sigma$ is not degenerate, the unique optimal solution to equation~\eqref{eq_defn_Sharp_ratio} is
\begin{equation}
\hat{w}(\gamma,\mu,\Sigma) = 
\frac{\Sigma^{-1}\bar{1}}{\bar{1}^{\star} \Sigma \bar{1}}
+ \gamma \left( \Sigma^{-1}\mu - \frac{\bar{1}^{\star}\Sigma^{-1}\mu}{\bar{1}^{\star}\Sigma^{-1}\bar{1}}
\Sigma^{-1}\bar{1}
\right)
.
\label{eq_optim_portfolio}
\end{equation}
The particular case where $t$ is set to $0$ is the minimum variance portfolio of \cite{markowitz1968portfolio}.  The minimum-variance portfolio $\hat{w}(0,\mu,\Sigma)$ may also be derived by minimizing the portfolio variance subject to the budget constraint $\bar{1}^\star w=1$.  By adding a risk-free asset to the portfolio, one can derive similar expressions for the market portfolio and the capital market line (for more details on this approach to portfolio theory see \cite{BestOptimizationPortfolio}).  %

Unlike the returns vector $\mu$, a portfolio's covariance matrix is not meaningfully represented in Euclidean space.  That is, a covariance matrix does not scale linearly and the difference of covariance matrices need not be a covariance matrix.  
Therefore, forecasting a future covariance matrix, even through a simple technique such as linear regression directly to the components of $\Sigma$, can lead to meaningless forecasts.  Using the intrinsic geometry of the set of positive-definite matrices, denoted by $\mathscr{P}_D^+$, avoids these issues.  %

The space $\mathscr{P}_D^+$, has a well studied and rich geometry lying at the junction of Cartan-Hadamard geometry and Lie theory.  Empirical exploitation of this geometry has found many applications in mathematical imaging (see \cite{moakher2011riemannian}), computer vision (see \cite{pennec2006riemannian}), and signal processing (see \cite{barbaresco2008innovative}).  Moreover, connections between this geometry and information theory have been explored in \cite{smith2005covariance}, linking it to the Cramer-Rao lower bound.  %

The set $\mathscr{P}_D^+$ is smooth and comes equipped with a natural infinitesimal notion of distance called Riemannian metric.  Denoted by $g$, the Riemannian metric on $\mathscr{P}_D^+$ quantifies the difference in making infinitesimal movements in Euclidean space along $\mathscr{P}_D^+$ to making infinitesimal movements with respect to the geometry of $\mathscr{P}_D^+$.  The description of Riemannian manifolds as subsets of Euclidean space is made rigorous by Nash in the embedding theorem in \cite{nash1956imbedding}.  Distance between two points on $\mathscr{P}_D^+$ is quantified by the length of the shortest path connecting the two points, called a geodesic.  On $\mathscr{P}_D^+$, any two points can always be joined by geodesic.  The distance function taking two points to the length of the unique most efficient curve joining them can be expressed as
\begin{equation}
d_g^2\left(
\Sigma_1
,
\Sigma_2
\right)
\triangleq 
\left\|
\log\left(
\Sigma_2^{\frac{1}{2}}\Sigma_1\Sigma_2^{\frac{1}{2}}
\right)
\right\|_F^2
= \sum_{i=1}^{d}
\lambda_i^2\left(
\log\left(
\Sigma_2^{-\frac{1}{2}}\Sigma_1\Sigma_2^{-\frac{1}{2}}
\right)
\right)
.
\label{eq_dist_mani_PD}
\end{equation}
The function $d_g$ makes $\mathscr{P}_D^+$ into a complete metric space, where the distance between two points corresponds exactly to the length of the unique distance minimizing geodesic connecting them.  Where, $\|\cdot\|_F$ is the Frobenius norm, which first treats a matrix as a vector and subsequently computes its Euclidean norm, ${\Sigma}^{\frac1{2}}$ is the matrix square-root operator, $\log$ is the matrix logarithm, and $\lambda_i(\Sigma)$ denotes $i^{th}$ eigenvalue of $\Sigma$.  Both the log and ${\Sigma}^{\frac1{2}}$ operators are well-defined on $\mathscr{P}_D^+$.

The disparity between the distance measurements is explained by the intrinsic curvature of $\mathscr{P}_D^+$.  Sectional curvature is a formalism for describing curvature intrinsically to a space, such as $\mathscr{P}_D^+$.  It is measured by sliding a plane tangentially to geodesic paths and measuring the twisting and turning undergone by that tangential plane.  %
A detailed measurement of $\mathscr{P}_D^+$ shows that its sectional curvature is everywhere non-positive.  This means that locally the space $\mathscr{P}_D^+$ is locally curved somewhat between a pseudo-sphere and Euclidean space.  Alternatively this can be described by stating that $\mathscr{P}_D^+$ nowhere bulges out like a circle but is instead puckered in or flat.  

A smooth subspace of Euclidean space having everywhere non-positive curvature when equipped with a Riemannian metric, and for which every pair of points can be joined by a unique distance minimizing geodesic is called a Cartan-Hadamard manifold.  These spaces posses many well-behaved properties, as studied in \cite{NonposCurvBall}, but for this discussion the most relevant property of Cartan-Hadamard manifolds to this paper is the existence of a smooth map $\LOG{}{}$ from $\mathscr{P}_D^+\times \mathscr{P}_D^+$ onto $\rrd$.  Here $\rrd$ is the Euclidean space of equal dimension to $\mathscr{P}_D^+$.  For every fixed input, this map is infinitely differentiable, has an infinitely differentiable inverse and therefore puts $\mathscr{P}_D^+$ in smooth correspondence with $\rrd$.  The map $\LOG{}{}$ is called the Riemannian Logarithm.  It is related to the distance between two covariance matrices through
\begin{align}
\nonumber
d_g\left(\Sigma_1,\Sigma_2\right) =& \|\LOG{\Sigma_1}{\Sigma_2}\|_2,
\\
Log_{\Sigma_1}(\Sigma_2) \triangleq &
\Sigma_1^{\frac{1}{2}}log\left(
\Sigma_1^{-\frac{1}{2}}\Sigma_2\Sigma_1^{-\frac{1}{2}}
\right)
\Sigma_1^{\frac{1}{2}}
.
\label{eq_log_PD_pluss}
\end{align}
The Riemannian Exponential map, denoted by $\EXP{}{}$, is the inverse of $\LOG{}{}$.  The Riemannian Exponential map takes a covariance matrix $\Sigma_1$ and a tangential velocity vector $v$ to $\Sigma_1$, and maps it to the covariance matrix $\Sigma_2$, found by traveling along $\mathscr{P}_D^+$ at the most efficient path beginning at $\Sigma_1$ with initial velocity $v$ and stopping the movement after one time unit.  Geodesics on $\mathscr{P}_D^+$ are obtained by scaling the initial velocity vector in the $\EXP{}{}$ map, which is expressed as
\begin{align}
\label{eq_exp_PD_pluss}
\EXP{\Sigma_1}{v} \triangleq &
\Sigma_1^{\frac{1}{2}}exp\left(
\Sigma_1^{-\frac{1}{2}}Sym(v)\Sigma_1^{-\frac{1}{2}}
\right)
\Sigma_1^{\frac{1}{2}}\\
\nonumber
Sym(v)\triangleq &
\begin{pmatrix}
v_{1}  & v_{2} &  \ldots  & v_{D} \\
v_{2} & v_{D+1} &  \ldots  & v_{2D-1} \\
\vdots&  & \ddots & \vdots \\
v_D &  & \ldots      & v_{\frac{D(D+1)}{2}}.
\end{pmatrix} 
,
\end{align}
where $exp$ is the matrix exponential.

Returning to portfolio theory, any efficient portfolio in the sense of equation~\eqref{eq_optim_portfolio}, is entirely characterized by the log-returns, the non-degenerate covariance structure between the risky assets, and the risk-aversion level.  The space parameterizing all the efficient portfolios, which will be called the Markowitz space after \cite{markowitz1968portfolio}, has a natural geometric structure.
\begin{defn}[Markowitz Space]\label{defn_Mark}
	Let $g_E^1,g_E^D,$ and $g$ be the Euclidean Riemannian metrics on $\rr$, $\rrD$, and the Riemannian metric on $\mathscr{P}_D^+$.  The Riemannian manifold $$\left(\mathscr{M}^{Mrk}_D,g^{Mrk}_D\right)\triangleq \left(
	\rr\times \rrd\times \mathscr{P}_D^+,g_E^1\oplus g_E^2\oplus g
	\right)$$ is called the ($D$-dimensional) Markowitz space.  
\end{defn}
\begin{prop}[Select Properties of the Markowitz Space]\label{prop_Mark_space_properties}
	The Markowitz space is connected, of non-positive curvature, and its associated metric space is complete.  The distance function is 
	\footnotesize\begin{align}
	d_{Mrk}^2\left((\gamma_1,\mu_1,\Sigma_1),
	(\gamma_2,\mu_2,\Sigma_2)
	\right) \triangleq &
	\|\gamma_2-\gamma_1\|_2^2 + \|\mu_2-\mu_1\|_2^2
	+ 
	\left(\sum_{i=1}^{d}
	\lambda_i^2\left(
	\log\left(
	\Sigma_2^{-\frac{1}{2}}\Sigma_1\Sigma_2^{-\frac{1}{2}}
	\right)
	\right)\right)^2
	.
	\label{prop_MARK_space_properties_metric}
	\end{align}\normalsize
	The Riemannian $\LOG{}{}$ and $\EXP{}{}$ maps on $\mmm^{Mrk}$ are of the form
	\begin{align}
	\nonumber
	\EXP{(\gamma_1,\mu_1,\Sigma_1)}{(v_1,v_2,v_3)} \triangleq &
	\left(
	\gamma_1 + v_1,\mu_1 + v_2,
	\Sigma_1^{\frac{1}{2}}exp\left(
	\Sigma_1^{-\frac{1}{2}}Sym(v_3)\Sigma_1^{-\frac{1}{2}}
	\right)
	\Sigma_1^{\frac{1}{2}}
	\right) \\
	\LOG{(\gamma_1,\mu_1,\Sigma_1)}{
		(\gamma_2,\mu_2,\Sigma_2)
	} \triangleq &
	\left(
	\gamma_2-\gamma_1,\mu_2-\mu_1,
	\Sigma_1^{\frac{1}{2}}log\left(
	\Sigma_1^{-\frac{1}{2}}\Sigma_2\Sigma_1^{-\frac{1}{2}}
	\right)
	\Sigma_1^{\frac{1}{2}}
	\right)
	\label{prop_MARK_space_properties_EXPLOG}
	.
	\end{align}
	Note that the Riemannian exponential and logarithm maps are defined everywhere and put $\mmm^{Mrk}$ in a smooth $1$ to $1$ correspondence with $\rrflex{1+D+\frac{D(D+1)}{2}}$.  
\end{prop}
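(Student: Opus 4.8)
The plan is to derive every clause of the statement from the general behaviour of Riemannian \emph{products}, applied to the three factors $(\rr,g_E^1)$, $(\rrd,g_E^2)$ and $(\mathscr{P}_D^+,g)$ whose individual properties are already in hand. First I would record what is needed about the factors. For the two Euclidean factors everything is trivial: they are connected, flat (hence of non-positive sectional curvature), geodesically and metrically complete, their geodesics are affine lines, and $\operatorname{Exp}_p v = p+v$, $\operatorname{Log}_p q = q-p$. For $(\mathscr{P}_D^+,g)$ the required input is precisely the Cartan--Hadamard package summarised before Definition~\ref{defn_Mark} (see \cite{NonposCurvBall,moakher2011riemannian,pennec2006riemannian}): connectedness (it is a convex cone), everywhere non-positive sectional curvature, completeness of the induced metric space with distance \eqref{eq_dist_mani_PD}, and the fact that \eqref{eq_exp_PD_pluss}--\eqref{eq_log_PD_pluss} are genuinely the Riemannian exponential and logarithm, each a global diffeomorphism, with $T_\Sigma\mathscr{P}_D^+$ identified through $Sym$ with the symmetric $D\times D$ matrices, i.e. with $\rrflex{\frac{D(D+1)}{2}}$.

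Next I would assemble these via the standard product constructions. Connectedness of $\mmm^{Mrk}_D$ is immediate, since a finite product of connected spaces is connected. The Levi--Civita connection of a product metric is the product connection, so the geodesic equation decouples: a curve in $\mmm^{Mrk}_D$ is a geodesic iff each of its three coordinate curves is a geodesic in the corresponding factor. Consequently the energy (length) functional splits as an orthogonal sum, which gives at once the distance identity $d_{Mrk}^2 = \|\gamma_2-\gamma_1\|_2^2 + \|\mu_2-\mu_1\|_2^2 + d_g^2(\Sigma_1,\Sigma_2)$ --- hence \eqref{prop_MARK_space_properties_metric} after inserting \eqref{eq_dist_mani_PD} --- and the fact that $\operatorname{Exp}$ and $\operatorname{Log}$ on $\mmm^{Mrk}_D$ act coordinatewise; substituting the three factor formulas yields \eqref{prop_MARK_space_properties_EXPLOG}. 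Metric completeness follows because an $\ell^2$-product of complete metric spaces is complete, equivalently because a product of geodesically complete manifolds is geodesically complete and Hopf--Rinow applies. For the curvature I would use that the curvature tensor of a product is block-diagonal --- $R(X,Y)Z$ couples only like-factor components --- so for orthonormal $X,Y\in T_p\mmm^{Mrk}_D$ the quantity $\langle R(X,Y)Y,X\rangle$ is the sum of the three factor contributions, each $\le 0$; hence every tangent plane of $\mmm^{Mrk}_D$ has non-positive sectional curvature. Finally, since each coordinate of $\operatorname{Exp}$ is a diffeomorphism from its factor's tangent space onto that factor, the product map is a diffeomorphism of $T_p\mmm^{Mrk}_D \cong \rrflex{1+D+\frac{D(D+1)}{2}}$ onto $\mmm^{Mrk}_D$, which is the last assertion, the dimension count being $1+D+\tfrac{D(D+1)}{2}$.

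The only ingredient that is not bookkeeping is the $(\mathscr{P}_D^+,g)$ package --- in particular that \eqref{eq_exp_PD_pluss}--\eqref{eq_log_PD_pluss} really are the Riemannian exponential/logarithm and that $\mathscr{P}_D^+$ has non-positive sectional curvature --- but this is classical and was reviewed above, so I do not expect it to be the true obstacle. The point that will need the most care is reconciling conventions: identifying $T_\Sigma\mathscr{P}_D^+$ with $\rrflex{\frac{D(D+1)}{2}}$ via $Sym$ so that the stated domains of the maps in \eqref{prop_MARK_space_properties_EXPLOG} and the dimension count $1+D+\tfrac{D(D+1)}{2}$ are literally correct, and checking that the $\ell^2$-sum appearing in the product distance is consistent with the squared distance \eqref{eq_dist_mani_PD} used for the $\mathscr{P}_D^+$ coordinate.
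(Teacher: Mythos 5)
Your proposal is correct and follows essentially the same route as the paper's own proof: both reduce everything to the product-manifold structure (product Levi--Civita connection, componentwise geodesics and hence componentwise $\operatorname{Exp}/\operatorname{Log}$, $\ell^2$-product distance, completeness via Hopf--Rinow, non-positive curvature from the factors) together with the classical Cartan--Hadamard facts about $\mathscr{P}_D^+$. The only cosmetic difference is that the paper obtains the global diffeomorphism onto $\rrflex{1+D+\frac{D(D+1)}{2}}$ by invoking the Cartan--Hadamard theorem, whereas you get it directly from each factor's exponential being a diffeomorphism; your block-diagonal curvature-tensor argument is, if anything, a slightly more careful justification of the curvature claim than the paper's ``sum of sectional curvatures'' shorthand.
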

\begin{proof}
	The proof is deferred to the appendix.  
\end{proof}
The Markowitz space serves as the prototypical example of the geometric spaces considered in the rest of this paper, these are Riemannian manifolds, of non-positive curvature, for which every two points can be joined by a unique distance minimizing geodesic.  In the remainder of this paper, all Riemannian manifolds will be Cartan-Hadamard manifolds.  
Cartan-Hadamard manifolds appear in many places in mathematical finance, for example in \cite{henry2008analysis} the natural geometry associated with stochastic volatility models with two driving factors are Cartan-Hadamard manifolds.  

On Cartan-Hadamard manifolds, such as the Markowitz space, there is no rigorously defined notion of conditional expectation.  Therefore rigorous estimation intrinsic to these spaces' geometries is still a generally unsolved problem.  We motivate this problem by discussing a few formulations of intrinsic conditional expectation and related empirical techniques present in the mathematical imaging literature.

The least-squares formulation of conditional expectation is 
$$
\eecond{X_t}{\gggg}\triangleq\operatorname{argmin}_{Z\in L^2_{\pp}(\gggg;\rrd)} \eep{\|X_t - Z\|_2^2}
.
$$
Replacing the expected Euclidean distance by the expected intrinsic distance gives the typical formulation of a non-Euclidean conditional expectation.  This formulation will be referred to as intrinsic conditional expectation.  

Alternatively, estimates in a Riemannian manifold are made by locally linearizing the data using the Riemannian log map, performing the estimate in Euclidean space, and returning the data back onto the manifold.  This type of methodology has been used extensively in the computer vision and mathematical imaging literature by \cite{fletcher2004principal,han2015geometric,hauberg2013unscented,allerhand2011robust}, and \cite{snoussi2006particle}.  In \cite{snoussi2006particle}, the authors empirically support estimating the intrinsic conditional expectation a following procedure which first linearizes the observation using the Riemannian Log transform, subsequently computes the conditional expectation in Euclidean space, and lastly returns the prediction onto the Riemannian manifold using the Riemannian Exp map.  

This paper provides a rigorous framework for the two methods described above, proves the existence of their optimum, and shows that both formulations agree.  The rigorous formulation of the non-Euclidean filtering algorithm of \cite{snoussi2006particle} is used to derive non-Euclidean filtering equations.  The non-Euclidean filtering problem is implemented and used to accurately forecast efficient portfolios by exploiting the geometry of the Markowitz space.  

Empirical evidence for the importance of considering non-Euclidean geometry will be examined in the next section before developing a general theory of non-Euclidean conditional expectation in Section~\ref{sec_EstLOB}.  

\section{Non-Euclidean Conditional Expectations and Intrinsic Forecasting}\label{sec_EstLOB}

Let $\eecond{X_t}{\gggt}$ denote the vector-valued conditional expectation in $\rrd$.  Let $0<\Delta <t$ and consider %
\begin{align}
\eecond{X_t}{\gggt} 
= &
\lim\limits_{\Delta \downarrow 0}\eecond{X_t}{\gggt} \nonumber \\ 
= & 
\left(
\lim\limits_{\Delta \downarrow 0}\eecond{X_{t-\Delta}}{\gggg_{t -\Delta}} + 
\eecond{\left(X_t - \eecond{X_{t-\Delta}}{\gggg_{t -\Delta}}\right)}{\gggt}
\right) \nonumber
\\
=& %
\lim\limits_{\Delta \downarrow 0}
\EXP{
	\eecond{X_{t-\Delta}}{\gggg_{t-\Delta}}
}{
	\eecond{
		\LOG{
			\eecond{X_{t-\Delta}}{\gggg_{t-\Delta}}
		}{
			X_t
		}
	}{\gggt}
}
.
\label{eq_computor}
\end{align}
The first equality is obtained by taking the limit of a constant sequence and the second line it achieved using the $\gggt$-measurability of $\eecond{X_{t-\Delta}}{\gggg_{t-\Delta}}$ and the linearity of conditional expectation.  
The last line of equation~(\ref{eq_computor}) is obtained by using the fact that the Riemannian Exponential and Logarithm maps in Euclidean space respectively correspond to addition and subtraction.  

Equation~\eqref{eq_computor} expresses the conditional expectation at time $t$ as moving from the conditional expectation at an arbitrarily close past time along a straight line with initial velocity given determined by the position of $X_t$ and the last computed conditional expectation.  The past time-period is made arbitrarily small by taking the limit  $\Delta \rightarrow 0$.  

Equation~\eqref{eq_computor} may be generalized and taken to be the definition of conditional expectation in the general Cartan-Hadamard manifold setting.   In general, this definition will rely on a particular non-anticipative pathwise extension of a process.  The definition of this pathwise extension is similar to the horizontal path extensions introduced in \cite{dupire1994pricing}. The extension $\exT{X}_t$ of a process $X_t$ holds the initial realized value $X_0$ constant back to $-\infty$ and the time $t$ value constant all the way to $\infty$.  Formally, $\exT{X}_t$ is defined pathwise by
\begin{figure}[H]
	\centering
	\begin{equation*}
	\exT{X}_s(\omega)\triangleq \begin{cases}
	X_t(\omega) & t\leq s\\
	X_s(\omega) & 0\leq s\leq t\\
	X_0(\omega) & s\leq 0
	\end{cases}
	.
	\end{equation*}
	\caption{Extension of the process $X_t$. }
	\label{eq_extension_back}
\end{figure}
The next assumption will be made to ensure that the initial conditional probability laws exist on $\mmm$.  
\begin{ass}\label{lem_IC_reg}
	Suppose that $X_0$ is $\gggg_0$-measurable and is absolutely continuous with respect to the intrinsic measure $\mu_g$ on $(\mmm,g)$.  Denote its density by $f_0$, and assume that there exists at-least one point $x_0$ in $\mmm$ such that the integral $\int_{y \in \mmm}
	d_g^2\left(x_0,y\right)
	f_0(y)
	\mu_g(dy)
	$ is finite.  
\end{ass}
\begin{defn}[Geodesic Conditional Expectation]\label{defn_ICEnew}
	Let $X_t$ be an $(\mmm,g)$-valued c\`{a}dl\`{a}g process and $\gggg_{t}$ be a sub-filtration of $\fff_{t}$.  The geodesic conditional expectation of $X_{t}$ given $\gggg_{t}$, denoted by $X_t^{g}$ is defined to be the solution to the recursive system
	\begin{align}
	\label{eq_defn_ICEnew_1}
	X_{t}^{g}
	\triangleq &
	\begin{cases}
	\lim\limits_{n \mapsto \infty}
	\EXP{
		X_{t -\frac1{n}}^{g}
	}{
		\eecond{
			\LOG{
				X_{t -\frac1{n}}^{g}
			}{
				\exTflex{X}{t}_t
			}
		}{\gggt}^o
	} & \mbox{ if} t>0\\
	\argmin{x \in \mmm} 
	\int_{y \in \mmm}
	d_g^2\left(x,y\right)
	f_0(y)
	\mu_g(dy) & \mbox{ if } t \leq 0
	,
	\end{cases}
	\end{align}
	where $Y^{o}$ is the $\gggt$-optional projection.  
\end{defn}
The geometric intuition behind equation~\eqref{eq_defn_ICEnew_1} is that the current  geodesic conditional expectation at time $t$ is computed by first predicting the infinitesimal velocity describing the current state on $(\mmm,g)$ from the previous estimate at time $t-\frac1{n}$, and then moving across the infinitesimal geodesic along $(\mmm,g)$ in that direction.  The computational implication of equation~\eqref{eq_defn_ICEnew_1} is that all the classical tools for computing the classical Euclidean conditional expectation may be used to compute the geodesic conditional expectation, once the Riemannian Exp and Riemannian Log maps are computed.  
\begin{lem}[Existence of Initial Condition]\label{lem_IC_Comp}
	Under Assumption~\ref{lem_IC_reg},	
	$X_0^g$ exists and is $\pp$-a.s\ unique.  
\end{lem}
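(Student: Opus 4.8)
The plan is to identify $X_0^g$ with the Fréchet (Karcher) barycenter of the probability measure $\nu(dy):=f_0(y)\,\mu_g(dy)$ on the Cartan--Hadamard manifold $(\mmm,g)$ and to establish the classical existence-and-uniqueness statement for such barycenters. Write $J(x):=\int_{y\in\mmm} d_g^2(x,y)\,f_0(y)\,\mu_g(dy)$, so that by definition $X_0^g=\argmin{x\in\mmm} J(x)$, and note $\int_{\mmm} f_0\,d\mu_g=1$ since $f_0$ is the density of the random variable $X_0$. First I would show $J$ is finite on all of $\mmm$: combining the triangle inequality with $(a+b)^2\le 2a^2+2b^2$ gives $d_g^2(x,y)\le 2\,d_g^2(x,x_0)+2\,d_g^2(x_0,y)$, hence $J(x)\le 2\,d_g^2(x,x_0)+2\,J(x_0)<\infty$ by Assumption~\ref{lem_IC_reg}; the symmetric inequality $d_g^2(x,x_0)\le 2\,d_g^2(x,y)+2\,d_g^2(y,x_0)$ integrates to the coercivity bound $J(x)\ge\tfrac12 d_g^2(x,x_0)-J(x_0)$, so $J(x)\to\infty$ as $d_g(x,x_0)\to\infty$.

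Next I would check that $J$ is continuous. Using $d_g^2(x,y)-d_g^2(x',y)=\bigl(d_g(x,y)-d_g(x',y)\bigr)\bigl(d_g(x,y)+d_g(x',y)\bigr)$ together with $|d_g(x,y)-d_g(x',y)|\le d_g(x,x')$ and $d_g(x,y)+d_g(x',y)\le d_g(x,x')+2\,d_g(x',y)$, one obtains $|J(x)-J(x')|\le d_g(x,x')\bigl(d_g(x,x')+2\int_{\mmm}d_g(x',y)f_0(y)\mu_g(dy)\bigr)$, and the last integral is at most $J(x')^{1/2}$ by the Cauchy--Schwarz inequality applied to the probability measure $\nu$. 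Thus $J$ is locally Lipschitz. Existence of a minimizer then follows by the direct method: a minimizing sequence is bounded by coercivity, hence --- because a complete Riemannian manifold is a proper metric space (Hopf--Rinow) --- has a subsequence converging to some $x^\star\in\mmm$, and continuity of $J$ forces $J(x^\star)=\inf_{\mmm}J$.

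For uniqueness I would use geodesic convexity of $J$, which is where the Cartan--Hadamard hypothesis enters. Non-positive curvature yields the CAT(0) comparison inequality: for every constant-speed geodesic $\gamma:[0,1]\to\mmm$ and every $y\in\mmm$, $d_g^2(\gamma(t),y)\le(1-t)\,d_g^2(\gamma(0),y)+t\,d_g^2(\gamma(1),y)-t(1-t)\,d_g^2(\gamma(0),\gamma(1))$. Integrating this pointwise-in-$y$ inequality against $\nu$ (legitimate since all three terms are $\nu$-integrable by the finiteness step) gives $J(\gamma(t))\le(1-t)J(\gamma(0))+tJ(\gamma(1))-t(1-t)\,d_g^2(\gamma(0),\gamma(1))$, i.e.\ $J$ is strongly convex along geodesics. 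If $x_1\neq x_2$ were two minimizers, letting $\gamma$ be the unique minimizing geodesic between them (unique by the defining property of Cartan--Hadamard manifolds) and taking $t=\tfrac12$ would give $J(\gamma(\tfrac12))\le\inf_{\mmm}J-\tfrac14 d_g^2(x_1,x_2)<\inf_{\mmm}J$, a contradiction. Finally, since $J$ is unchanged if $f_0$ is replaced by another $\mu_g$-version, $X_0^g$ is a well-defined single point of $\mmm$, and any $\gggg_0$-measurable random variable equal to it agrees $\pp$-a.s., giving the asserted almost-sure uniqueness.

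I expect the only genuinely delicate point to be invoking the strong-convexity/CAT(0) inequality on $\mmm$ in the correct form --- equivalently, the comparison-theorem fact that on a Cartan--Hadamard manifold $x\mapsto d_g^2(x,y)$ has Hessian $\ge 2g$ --- and justifying that this pointwise inequality may be integrated in $y$; the remaining steps are routine estimates together with Hopf--Rinow.
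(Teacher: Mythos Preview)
Your argument is correct and is precisely the standard existence--uniqueness proof for the Fr\'echet barycenter of a probability measure with a finite second moment on a Cartan--Hadamard manifold: finiteness and coercivity of $J$ from the triangle inequality, lower semicontinuity plus properness (via Hopf--Rinow) for existence, and the CAT(0) comparison inequality for strict geodesic convexity and hence uniqueness.

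The paper, however, does not prove this at all: its entire proof is a one-line citation to \cite[Exercise~5.11]{NonposCurvBall}, which states exactly this barycenter result. So you have not taken a different route so much as written out in full the argument that the paper outsources to a reference. What your version buys is self-containment and an explicit identification of where each hypothesis is used (finiteness of $J(x_0)$ for well-posedness, completeness for properness, non-positive curvature for uniqueness); what the paper's version buys is brevity. There is no mathematical discrepancy between the two.
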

\begin{proof}
	Under Assumption~\ref{lem_IC_reg}, \citep[Exercise 5.11]{NonposCurvBall} guarantees the existence of %
	$X_0$.    
\end{proof}

Geodesic conditional expectation is an atypical formulation of non-Euclidean conditional expectation.  Typically, non-Euclidean conditional expectation is defined as the $\mmm$-valued random element minimizing the expected intrinsic distance to $X_t$.  

Following \cite{SobolevRiemannian}, by first isometrically embedding $(\mmm,g)$ into a large Euclidean space $\rrD$, the space $L^p_{\pp}\left(\fff;\mmm\right)$ is subsequently defined as the subset of the Bochner-Lebesgue space $L^p_{\pp}\left(\fff;\rrD\right)$ consisting of the equivalence classes of measurable maps
which are $\pp$-a.s. supported on $\mmm$, and for which there exists some $\hat{X} \in \mmm$ for which
\begin{equation}
\left({\int_{\omega \in \mmm} d_g^p\left(X(\omega),\hat{X}\right)\pp\left(d\omega\right)}\right)^{\frac1{p}}
<\infty
\label{eq_defn_LebesgueIntrinsicFinite}
.
\end{equation}
The set $L^p_{\pp}\left(\fff;\mmm\right)$ is a Banach manifold (see \cite{tausk2002banach} for more general results).  
\begin{defn}[Intrinsic Conditional Expectation]\label{defn_ICEold}
	The intrinsic conditional expectation with respect to the $\sigma$-subalgebra $\gggt$ of $\fff$ of an $\mmm$-valued stochastic process $X_t$ is defined as the optimal Bayesian action
	$$
	\eecondgp{X_t}{\gggt}{p} \triangleq \arginf{Z_t \in L^p_{\pp}\left(\gggt;\mmm\right)}
	\eep{d_g^p(Z_t,X_t)}.
	$$
	When $p=2$, we will simply write $\eecondg{X_t}{\gggt}$.  
\end{defn}

Intuition about intrinsic conditional expectation is gained by turning to the Markowitz space.  
\begin{ex}%
	\label{ex_Mrk_cond}
	Let $\gamma\geq 0$ be fixed and constant.  Let $X_t\triangleq(\gamma,\mu_t,\Sigma_t)$ be a process taking values in the Markowitz space, equation~\eqref{prop_MARK_space_properties_metric}.  Then the intrinsic conditional expectation of $X_t$ given $\gggt$ is 
	\small\begin{equation}
	\begin{aligned}
	\eecondg{(\gamma,\mu_t,\Sigma_t)}{\gggt} = &
	\argmin{(\tilde{\mu}_t,\tilde{\Sigma}_t) \in 
		\LLp{\gggt}{p}
	}
	\eep{\|\mu_t-\tilde{\mu}_t\|_2^2} \\
	& + \eep{
		\left(	\sum_{i=1}^{d}
		\lambda_i^2\left(
		\log\left(
		\sqrt{\tilde{\Sigma}_t}^{-1}\Sigma_t\sqrt{\tilde{\Sigma}_t}^{-1}
		\right)
		\right)\right)^2
	}
	.
	\end{aligned}
	\label{eq_Mark_reg}
	\end{equation}\normalsize
	The conditional expectation intrinsic to the Markowitz space seeks portfolio weights which give the most likely log-returns given the information in $\gggt$, while penalizing for the variance taken on by following that path. 
	
	In the case where $\Sigma_t$ is independent of $\mu_t$ and $\Sigma_t$ is $\gggt$-measurable, equation~\eqref{eq_Mark_reg} simplifies.  Since $\mu_t$ does not depend on $\Sigma_t$ and the latter is in $\LLp{\gggt}{2}$, $\Sigma_t$ may be substituted into the second term, which sets it to zero.  Therefore, in this simplified scenario the least-squares property of Euclidean conditional expectation (see \citep[Page 80]{kallenberg2006foundations}) that
	\begin{equation}
	\eecondg{(\gamma,\mu_t,\Sigma_t)}{\gggt} =
	\argmin{(\tilde{\mu}_t,\tilde{\Sigma}_t) \in 
		\LLp{\gggt}{p}
	}
	\eep{\|\mu_t-\tilde{\mu}_t\|_2^2}
	= \eecond{\mu_t}{\gggt}
	.
	\label{eq_Mark_reg_reduce}
	\end{equation}
\end{ex}
There is a natural topology defined on $L^p_{\pp}(\gggt;\mmm)$ which is characterized as being the weakest topology on which sequences of c\'{a}dl'{a}g process process $\left\{X_{t-\frac{1}{n}}\right\}_{n \in \nn}$ in $L^p_{\pp}(\gggt;\mmm)$ converge to $X_t$ in $L^p_{\pp}(\gggt;\mmm)$ (see~\ref{ref_rel_top} for a rigorous discussion).  For any two elements $X$ and $Y$ of $L^p_{\pp}(\gggt;\mmm)$ with this topology, we will write
$$
X\equiv
Y,
$$
if $X$ and $Y$ are indistinguishable in this topology.  Intuitively, this means that they cannot be further separated in the topology.  For example in $\rrD$ two points are indistinguishable if and only if they are equal, the same is true for example in metric spaces.  Whereas in the space of measurable functions from $\rr$ to itself which are square integrable equipped with its usual topology, two functions are inst indistinguishable if and only if they are equal on almost all points (see \cite{Kelleytop} for details on topological indistinguishability.)

Under mild assumptions, the geodesic conditional expectation and intrinsic conditional expectation agree on Cartan-Hadamard spaces as shown in the following theorem.
\begin{thrm}[Unified Conditional Expectations]\label{thrm_main}
	Let $X_t$ be an $\mmm$-valued process with c\`{a}dl\`{a}g paths which is in $L^2(\gggt;\mmm)$ for $m$-a.e. $t\geq 0$ and is such that Assumptions~\ref{lem_IC_reg} and~\ref{ass_finite} hold.  
	For $1\leq p<\infty$, the intrinsic conditional expectation $\eecondg{X_t}{\gggt}$ exists. Moreover, if $p=2$, then 
	\begin{align}
	\label{eq_thrm_main1}
	\exTflex{\eecondg{\exT{X}_t}{\gggt}}{t}
	\equiv
	\exT{X^g_t}
	,
	\end{align}
	where the left-hand side of equation~\eqref{eq_thrm_main1} is the intrinsic conditional expectation and its right-hand side is the geodesic conditional expectation.  
\end{thrm}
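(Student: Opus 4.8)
The plan is to establish the two assertions of the theorem separately: first the existence of the intrinsic conditional expectation $\eecondg{X_t}{\gggt}$ for all $1\le p<\infty$, and then, in the case $p=2$, the identification \eqref{eq_thrm_main1} with the geodesic conditional expectation. For existence, I would fix $t$ with $X_t\in L^p_{\pp}(\gggt;\mmm)$ and argue that the functional $Z\mapsto \eep{d_g^p(Z,X_t)}$ attains its infimum over $L^p_{\pp}(\gggt;\mmm)$. The key structural input is that $(\mmm,g)$ is Cartan--Hadamard: $d_g$ is then jointly convex along geodesics, so the functional is geodesically convex, and it is lower semicontinuous and coercive (the latter using Assumption~\ref{lem_IC_reg} together with Assumption~\ref{ass_finite}, which should supply the finiteness of $\eep{d_g^p(x_0,X_t)}$ needed to keep minimizing sequences in a bounded ball). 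One then invokes the standard existence result for minimizers of convex coercive functionals on complete $\mathrm{CAT}(0)$ spaces — this is exactly the barycenter/optimal-action existence statement of the type cited from \cite{NonposCurvBall} in Lemma~\ref{lem_IC_Comp}, applied now fibrewise over $\gggt$ rather than pointwise. Measurability and $\gggt$-adaptedness of the minimizer follow from a measurable-selection argument, the selection being essentially unique by strict convexity when $p>1$.

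For the identification \eqref{eq_thrm_main1} when $p=2$, the idea is to show that the geodesic conditional expectation $X_t^g$, built recursively in Definition~\ref{defn_ICEnew}, actually realizes the intrinsic minimizer, up to topological indistinguishability in $L^2_{\pp}(\gggt;\mmm)$. First I would check the base case $t\le 0$: there Definition~\ref{defn_ICEnew} gives $X_0^g=\argmin_{x}\int d_g^2(x,y)f_0(y)\mu_g(dy)$, which is by definition (with the deterministic $\gggg_0$-trivial conditioning, i.e.\ the law of $X_0$) precisely $\eecondg{X_0}{\gggg_0}$; this is Lemma~\ref{lem_IC_Comp}. For $t>0$ I would proceed by showing that the recursion in \eqref{eq_defn_ICEnew_1} is exactly the manifold transcription of the telescoping identity \eqref{eq_computor}: each step
$$
\EXP{X^g_{t-\frac1n}}{\eecond{\LOG{X^g_{t-\frac1n}}{\exTflex{X}{t}_t}}{\gggt}^o}
$$
is the intrinsic analogue of "previous estimate plus Euclidean conditional increment," and in the flat case it collapses to the second line of \eqref{eq_computor}. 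The substantive point is that in the Cartan--Hadamard case the nonlinear corrections introduced by $\EXP{}{}$ and $\LOG{}{}$ vanish in the limit $n\to\infty$: as $\Delta=\frac1n\downarrow 0$, $X^g_{t-\Delta}$ and $\exTflex{X}{t}_t$ become close in $L^2$ (by càdlàg right-continuity and the $L^2$-hypothesis at $m$-a.e.\ $t$), the Riemannian $\LOG$ is a local $C^\infty$-diffeomorphism onto $\rrd$ with derivative the identity at the base point, so the discrepancy between the intrinsic increment and its flat linearization is of higher order in $d_g(X^g_{t-\Delta},X_t)$. Passing this to the limit and using continuity of the minimization problem in the topology on $L^2_{\pp}(\gggt;\mmm)$ — the very topology in which $\{X_{t-\frac1n}\}$ converges to $X_t$ — yields that the limit of the recursion solves the same first-order optimality condition (vanishing of the intrinsic "gradient," i.e.\ $\eecond{\LOG{X_t^g}{X_t}}{\gggt}^o\equiv 0$) that characterizes $\eecondg{X_t}{\gggt}$ on a $\mathrm{CAT}(0)$ space. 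Combined with the uniqueness from geodesic convexity, this gives \eqref{eq_thrm_main1}, and the extension operator $\exT{(\cdot)}$ is carried along trivially since it only freezes the path outside $[0,t]$.

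The main obstacle I anticipate is the limit step in the $p=2$ argument: controlling the accumulated nonlinear error of the recursion \eqref{eq_defn_ICEnew_1} as $n\to\infty$ and showing it converges, in the stated (weak, indistinguishability-type) topology on $L^2_{\pp}(\gggt;\mmm)$, to the minimizer rather than merely to some critical point. This needs a quantitative estimate on $\|\LOG{x}{y}-\LOG{x'}{y}\|$ and on the second fundamental form of the embedding $\mmm\hookrightarrow\rrD$ over the relevant compact sets, a uniform integrability argument to move these bounds past $\ee$, and a tightness/continuity argument for the optional projection as $\Delta\downarrow0$. A secondary technical point is making the measurable selection in the existence half genuinely $\gggt$-measurable; this is routine on $\mathrm{CAT}(0)$ spaces via the uniqueness of barycenters but should be stated carefully. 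I expect Assumption~\ref{ass_finite} (referenced but not displayed in this excerpt) to be exactly what supplies the moment bound that makes both the coercivity and the uniform-integrability steps go through.
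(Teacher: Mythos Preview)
Your approach differs substantially from the paper's. For existence, you propose a direct $\mathrm{CAT}(0)$ argument (geodesic convexity, coercivity, lower semicontinuity, measurable selection); for the identification you aim to show that the recursion limit satisfies the first-order barycentric condition $\eecond{\LOG{X_t^g}{X_t}}{\gggt}^o=0$ via quantitative estimates on the Riemannian logarithm. The paper instead builds a single $\Gamma$-convergence machine: it one-point compactifies a space $\LLp{\gggg_{\bigcdot}}{p}$ of ``eternal'' processes, proves that the functionals
\[
F_n(Z_{\bigcdot})=\int_{\rr}\eep{\left\|\LOGps{Z_{t-\frac1n}}{Z_t}-\LOGps{Z_{t-\frac1n}}{X_t}\right\|_2^p}dt
\]
$\Gamma$-converge to $F(Z_{\bigcdot})=\int_{\rr}\eep{D_g^p(Z_t,X_t)}dt$, and then invokes the fundamental theorem of $\Gamma$-convergence ($\min \Gamma\text{-}\lim F_n = \lim_n \inf F_n$) on this compact space. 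Existence of the intrinsic conditional expectation drops out of the $\Gamma$-limit having a minimizer; the identification follows by an uncountable strong induction in $t$, using the least-squares characterization of Euclidean conditional expectation (after the diffeomorphism $\LOGps{X^g_{t-1/n}}{\cdot}$) to identify the minimizer of each $F_n$ with the $n$-th geodesic update.

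The tradeoff: your route is conceptually closer to the $\mathrm{CAT}(0)$ barycenter literature, but the obstacle you flag---controlling the accumulated nonlinear error as $n\to\infty$ and showing the recursion actually converges to the minimizer---is exactly what the paper sidesteps by compactifying and appealing to the variational stability built into $\Gamma$-convergence. The paper never estimates $\|\LOG{x}{y}-\LOG{x'}{y}\|$ and never invokes uniform integrability; convergence of minimizers is automatic once $\Gamma$-convergence is established on a compact domain. One correction: Assumption~\ref{ass_finite} is not the moment bound you conjecture; it is the statement that neither $X_{\bigcdot}^g$ nor $X_{\bigcdot}$ equals the escape point $\infty$ of the compactification, which is what lets the paper collapse $D_g$, $\LOGps{}{}$, $\EXPps{}{}$ back to their ordinary pointwise versions at the end of the argument.
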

Theorem~\ref{thrm_main} justifies the particle filtering algorithm of \cite{snoussi2006particle}.  Before proving Theorem~\ref{thrm_main} and developing the required theory, a few implications and examples will be explored.  
\subsection{Filtering Equations}
Theorem~\ref{thrm_main} has computational implications in terms of forecasting the optimal intrinsic conditional expectation using the geodesic conditional expectation.  These implications are in the computable solution to the certain filtering problems.  

Instead of discussing the dynamics of a coupled pair of $\mmm$-valued signal process $X_t$ and observation processes $Y_t$ intrinsically to $(\mmm,g)$, Theorem~\ref{thrm_main} justifies locally linearizing $X_t$ and $Y_t$, then subsequently describing their Euclidean dynamics before finally returning them onto $\mmm$. %
More, specifically assume that
\begin{equation}
\begin{aligned}
\tilde{X}_t^i  = & \int_0^tf^i(\tilde{X}_u^i)du + \int_0^t\beta^i(u,\tilde{X}_u^i)dB_u^i,\\
\tilde{Y}_t^i = & 
\int_0^u c^i(\tilde{X}_u^i,\tilde{Y}_u^i)du + \int_0^t\alpha^i(u,\tilde{Y}_u^i) dW_u^i,\\
\tilde{X}_t^i \triangleq &\langle \LOG{X_{t-\frac1{n}}^g}{X_t},e_i\rangle_{\rrd} \\
\tilde{Y}_t^i \triangleq & 
\langle \LOG{X_{t-\frac1{n}}^g}{Y_t},e_i\rangle_{\rrd} \\
B^i\independent W^j, & \, B^i\independent B^j, \, W^i\independent W^j ; i\neq j
\end{aligned}
\label{eq_coupling}
\end{equation}
where $B_t$ and $W_t$ are independent Brownian motions and $\tilde{X}_t^i,\tilde{Y}_t^i$ satisfy the usual existence and uniqueness conditions (see \citep[Chapter 22.1]{CohenAndElliotStochAppl} for example).  This implies that $X_t^i$ depends on only itself and that $\tilde{Y}_t^i$ depends only one $X_t^i$ and itself.  In particular, this implies that
\begin{equation}
\eecond{\tilde{X}_t^i}{\gggt} =\eecond{\tilde{X}_t^i}{\gggt^i}
\label{eq_indep_simplifications}
,
\end{equation}
where $\gggg_t^i$ is the filtration generated only by $\tilde{Y}_t^i$.  Using these dynamics, asymptotic local filtering equations for the dynamics of the geodesic conditional expectation $X_t^{g}\triangleq \eecondg{X_t^{\mathfrak{ext}}}{\gggt}$ in terms of $\eta_t^{\mathfrak{ext}}$ can be deduced and are summarized in the following Corollary of Theorem~\ref{thrm_main}.  
\begin{cor}[Asymptotic Non-Euclidean Filter]\label{cor_Main_2}
	Let $\mmm=\rrd$, denote the $i^{th}$ coordinate of $X_t^{g}$ by $X_t^{g,i}$, 
	and suppose Assumptions~\ref{lem_IC_reg} and~\ref{ass_finite} as well as the assumptions on $X_t$ and $Y_t$ made in \citep[Chapter 22.1]{CohenAndElliotStochAppl}.  If $X_t^g$ is $\pp\otimes m$-a.e. unique, a version of the intrinsic conditional expectation $X_t^g\triangleq \eecondg{X_t^{\mathfrak{ext}}}{\gggt}$ must satisfy the SDE
	\begin{align}
	\nonumber
	X_t^{g,i}
	= &
	\lim\limits_{\Delta \mapsto 0^+}
	\langle\EXP{X^g_{t-\Delta}}{\sum_{i=1}^d X_0^i},e_i\rangle_{\rrd} 
	\\
	\nonumber
	+& 
	\int_0^t \left[
	\sum_{i=1}^d 
	\left\langle
	\frac{\partial}{\partial x_i}
	\EXP{X^g_{t-\Delta}}{\sum_{i=1}^d X_t^i},e_i\right\rangle_{\rrd}
	\eecond{f^i(X_u)}{\gggg_u^i}
	\right.
	\\
	\nonumber
	+&\left.
	\frac{1}{2}\sum_{i,j=1}^d 
	\left\langle
	\frac{\partial^2}{\partial x_ix_j}
	\EXP{X^g_{t-\Delta}}{\sum_{i=1}^d X_t^i},e_i
	\right\rangle_{\rrd}
	\Xi_u^{i,j}
	\right]
	du
	\\
	+& 
	\int_0^t
	\sum_{i=1}^d 
	\left\langle
	\frac{\partial}{\partial x_i}
	\EXP{X^g_{t-\Delta}}{\sum_{i=1}^d X_t^i},e_i\right\rangle_{\rrd}
	\eecond{f^i(X_u)}{\gggg_u^i}
	dV_u
	,
	\end{align}
	where the limit is taken with respect to the metric topology on $\LLpgen{\gggt}{2}{\pp}$ and the processes $\Xi_t^{i,j}$ are defined by\small
	$$
	\Xi_t^{i,j} \triangleq \left(
	\eecond{\tilde{X}_u^ic^i}{\gggg_u^i}
	-
	\eecond{\tilde{X}_u^i}{\gggg_u^i}\eecond{c^i(\tilde{X}_u^i)}{\gggg_u^i}
	\right)
	\Big(
	\eecond{X_u^jc^j}{\gggg_u^j}
	-
	\eecond{X_u^j}{\gggg_u^j}\eecond{c^i(X_u^j)}{\gggg_u^j}
	\Big)
	.
	$$\normalsize
\end{cor}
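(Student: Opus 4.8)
The plan is to derive the corollary from Theorem~\ref{thrm_main} together with the classical Fujisaki--Kallianpur--Kunita (FKK) nonlinear filtering equations. Theorem~\ref{thrm_main} identifies $X_t^{g}$ with the intrinsic conditional expectation and thus licenses the recursive representation of Definition~\ref{defn_ICEnew}: one writes $X_t^{g,i}$ as the limit, as $\Delta\downarrow 0$ in the metric topology of $\LLpgen{\gggt}{2}{\pp}$, of the coordinate $\langle\EXP{X^g_{t-\Delta}}{\eecond{\LOG{X^g_{t-\Delta}}{X_t^{\mathfrak{ext}}}}{\gggt}^{o}},e_i\rangle_{\rrd}$. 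Accordingly the strategy is: (i) for fixed $\Delta$, work in the chart centred at $X^g_{t-\Delta}$ and compute the dynamics of the inner Euclidean conditional expectation by FKK; (ii) push this forward through the smooth map $\EXP{X^g_{t-\Delta}}{\cdot}$ by It\^o's formula; (iii) pass to the limit $\Delta\downarrow 0$.

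For step (i), fix $\Delta>0$ and define $\tilde X_u^i$, $\tilde Y_u^i$ as in~\eqref{eq_coupling} with base point the $\gggg_{t-\Delta}$-measurable point $X^g_{t-\Delta}$, so that on $[t-\Delta,t]$ the pair $(\tilde X^i,\tilde Y^i)$ is a coupled signal--observation diffusion of the type treated in \citep[Chapter 22.1]{CohenAndElliotStochAppl}. Since the driving Brownian motions are mutually independent across coordinates, the filter decouples as in~\eqref{eq_indep_simplifications}, and the coordinatewise FKK equation reads
\[
d\,\eecond{\tilde X_u^i}{\gggg_u^i}
=\eecond{f^i(X_u)}{\gggg_u^i}\,du
+\Big(\eecond{\tilde X_u^i c^i}{\gggg_u^i}-\eecond{\tilde X_u^i}{\gggg_u^i}\,\eecond{c^i(\tilde X_u^i)}{\gggg_u^i}\Big)\,dV_u^i,
\]
with $V^i$ the scalar innovations process; the quadratic covariation of the vector filter $(\eecond{\tilde X_u^i}{\gggg_u^i})_i$ equals $\Xi_u^{i,j}\,du$ on the diagonal, off-diagonal entries vanishing by the independence of the $V^i$ (consistent with the factored form of $\Xi^{i,j}$ in the statement).

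For step (ii), apply It\^o's formula to $x\mapsto\langle\EXP{X^g_{t-\Delta}}{\sum_i x^i e_i},e_i\rangle_{\rrd}$ along the filter of step (i); the smoothness required, and the smooth dependence on the random base point, are exactly the Cartan--Hadamard properties of $\EXP{}{}$ recalled in Section~\ref{ss_GEOEFFPORT}. The constant term reproduces $\langle\EXP{X^g_{t-\Delta}}{\sum_i X_0^i},e_i\rangle_{\rrd}$ via the extension $X^{\mathfrak{ext}}$, which freezes the path at $X_0$ on $(-\infty,0]$; the first-order term, after substituting the FKK equation, splits into the $du$-integral against $\eecond{f^i(X_u)}{\gggg_u^i}$ with weight $\langle\partial_{x_i}\EXP{X^g_{t-\Delta}}{\cdot},e_i\rangle_{\rrd}$ and the $dV_u$-integral against the filter gain with the same weight; the second-order term, with weight $\tfrac12\langle\partial^2_{x_ix_j}\EXP{X^g_{t-\Delta}}{\cdot},e_i\rangle_{\rrd}$, is driven by $\Xi_u^{i,j}\,du$. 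This yields the asserted identity for the fixed-$\Delta$ approximant. Then, in step (iii), one lets $\Delta\downarrow 0$: by Definition~\ref{defn_ICEnew} and Theorem~\ref{thrm_main} the left-hand sides converge to $X_t^{g,i}$ in the metric topology on $\LLpgen{\gggt}{2}{\pp}$, and since each right-hand term is a continuous functional of $X^g_{t-\Delta}$ and of its (continuous) associated filter, the limit can be taken coordinate by coordinate.

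I expect step (iii) to be the main obstacle: one must show the $\Delta\downarrow 0$ limits of the drift and stochastic integrals exist and are stable in the topology on $\LLpgen{\gggt}{2}{\pp}$ --- which is where the $\Gamma$-convergence machinery behind Theorem~\ref{thrm_main} enters --- and that the resulting equation selects the correct version of the intrinsic conditional expectation under the $\pp\otimes m$-a.e.\ uniqueness hypothesis. A secondary point is securing enough integrability of the weights $\langle\partial_{x_i}\EXP{X^g_{t-\Delta}}{\cdot},e_i\rangle_{\rrd}$ and of $\Xi_u^{i,j}$, uniformly in $\Delta$, to justify interchanging the limit with the It\^o integral; the FKK computation in (i)--(ii) is otherwise routine once the chart is fixed.
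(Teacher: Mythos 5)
Your proposal follows essentially the same route as the paper's proof: the coordinatewise filtering (FKK) equations of \citep[Chapter 22.1]{CohenAndElliotStochAppl} for the linearized coordinates $\tilde{X}_t^i$, It\^{o}'s formula applied to $x\mapsto\langle\EXP{X^g_{t-\Delta}}{\sum_i x^i},e_i\rangle_{\rrd}$, and then Theorem~\ref{thrm_main} (together with optional projection) to identify the limit with the intrinsic conditional expectation. The technical concerns you raise about the $\Delta\downarrow 0$ passage are legitimate but are handled in the paper exactly by the invocation of Theorem~\ref{thrm_main}, so your outline matches the paper's argument.
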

\begin{proof}
	The proof is deferred to the appendix.  
\end{proof}
Corollary~\ref{cor_Main_2} gives a way to use classical Euclidean filtering methods to obtain arbitrarily precise approximations to an SDE for the non-Euclidean conditional expectation.  It is two-fold recursive as it requires the previous non-Euclidean conditional expectation $X_{t-\Delta}^g$ to compute the next update.  In practice, $X_t^g$ will be taken to be the previous asymptotic estimate.  

The next section investigates the numerical performance of the non-Euclidean filtering methodology.  
\section{Numerical Results}\label{ss_numper}
To evaluate the empirical performance of the filtering equations of Corollary~\ref{cor_Main_2}, 1000 successive closing prices ending on September 4$^{th}$ 2018, for the Apple and Google stock are considered.  The unobserved signal process $X_t$ is the covariance matrix between the closing prices at time $t$ and the observation process $Y_t$, is the empirical covariance matrix generated on $7$-day moving windows.

The signal and observation processes $X_t$ and $Y_t$ are assumed to be coupled by equation~\eqref{eq_coupling}.  The functions $f^i$ and $c^i$ are modeled as being deterministic linear functions and $\beta^i, \alpha^i$ are modeled as being constants.  
\begin{equation}
\begin{aligned}
\tilde{X}_t^i =& \int_0^t A^{i,i}\tilde{X}_u^idu
+
\int_0^t C^{i,i}dB_u^i\\
\tilde{Y}_t^i = & \int_0^t H^{i,i}\tilde{X}_u^idu +
\int_0^t K^{i,i}dW_u^i,\\
\tilde{Y}_t^i \triangleq & 
\langle \LOG{X_{t-\frac1{n}}^g}{Y_t},e_i\rangle_{\rrd} \\
\tilde{X}_t^i \triangleq &\langle \LOG{X_{t-\frac1{n}}^g}{X_t},e_i\rangle_{\rrd}\\
B^i\independent W^j,& \, B^i\independent B^j, \, W^i\independent W^j ; i\neq j
\end{aligned}
\label{eq_coupling_Kalman}
\end{equation}
where A, B, C, H, and K are invertible diagonal matrices non-zero determinant.  

Analogous dynamics are for the benchmark methods, ensuring that the Kalman filter is the solution to the stochastic filtering problem.  The values of $A,B,C,H,$ and $K$ are estimated using maximum likelihood estimation.  

Both the classical ($KF$) and proposed methods ($N$-$KF$) are also benchmarked against the non-Euclidean Kalman filtering algorithm of \cite{hauberg2013unscented} ($N$-$KF$-$int$).  This algorithm proposes that the dynamics of $X_t^i$ and $Y_t^i$ be modeled in Euclidean space using the transformations
$$
\begin{aligned}
\tilde{X}_t^i \triangleq \left\langle\LOG{\bar{\Sigma}}{X_t},e_i\right\rangle_{\rrd},\\
\tilde{Y}_t^i \triangleq \left\langle\LOG{\bar{\Sigma}}{Y_t},e_i\right\rangle_{\rrd},\\
\bar{\Sigma} \triangleq \argmin{\Sigma \in \mathscr{P}_D^+}\frac{1}{15}\sum_{j=1}^{15} d_g^2(\Sigma,Y_{t_j}),
\end{aligned}
$$
where $\bar{\Sigma}$ is the intrinsic Riemannian Barycenter (see \cite{bhattacharya2003large} for details properties of the intrinsic mean), and the Riemannian Log and Exp functions are derived from the geometry of $\mathscr{P}_D^+$ and not of $\mmm^{Mrk}$, $15$ was chosen by sequential-validation.  Unlike equations~\eqref{eq_coupling}, the Riemannian log and exp maps are always performed about the same point $\bar{\Sigma}$ and do not update.  This will be reflected in the estimates whose performance progressively degrades over time.  

The Riemannian Barycenter $\bar{\Sigma}$, is computed both intrinsically and extrinsically using the first $15$ empirical covariance matrices.  The extrinsic Riemannian Barycenter on $\mathscr{P}_D^+$ is defined to be the minimizer of
$$
{\bar{\Sigma}^{ext}}\triangleq \EXP{Y_1}{\frac{1}{15}\sum_{j=1}^{15} \LOG{Y_1}{Y_j}}
.
$$
The extrinsic formulation of the Kalman filtering algorithm of \cite{hauberg2013unscented} ($N$-$KF$-$ext$), models the linearized signal and observation processes by
$$
\begin{aligned}
\tilde{X}_t^i \triangleq \left\langle\LOG{{\bar{\Sigma}^{ext}}}{X_t},e_i\right\rangle_{\rrd},\\
\tilde{Y}_t^i \triangleq \left\langle\LOG{{\bar{\Sigma}^{ext}}}{Y_t},e_i\right\rangle_{\rrd}.
\end{aligned}
$$

The length of the moving window was calibrated in a way which maximized the performance of the standard Kalman-filter performed componentwise $(EUC)$.  The choice of $15$ observed covariance matrices used to compute the intrinsic mean was made by sequential validation on the initial $25\%$ of the data.  The findings are reported in the following table.  

\begin{table}[H] \centering 
	\caption{Efficient Portfolio One-Day Ahead Forecasts} 
	\label{tab:MARKOWITZ_BRIEF} 
		\begin{tabular}{@{\extracolsep{5pt}} |r|rr|rr|rr|} 
			\hline
			& $\gamma=0$ & & $\gamma=0.5$ & & $\gamma=1$ & \\
			\hline 
			\hline 
			& $\ell^2$ & $\ell^{\infty}$ & $\ell^2$ & $\ell^{\infty}$ & $\ell^2$ & $\ell^{\infty}$ \\ 
			\hline 
			\textbf{N-KF} & 3.706e-01 & 3.366e-01 & 5.024e-01 & 4.613e-01 & 4.945e-01 & 4.540e-01 \\
			EUC & 6.174e-01 & 5.507e-01 & 7.662e-01 & 6.863e-01 & 7.690e-01 & 6.890e-01 \\ 
			N-KF-int & 5.724e-01 & 5.455e-01 & 7.769e-01 & 7.407e-01 & 7.776e-01 & 7.412e-01 \\ 
			N-KF-ext & 5.244e-01 & 4.804e-01 & 7.078e-01 & 6.515e-01 & 7.051e-01 & 6.497e-01 \\
			\hline 
		\end{tabular} 
\end{table} 

Table~\ref{tab:MARKOWITZ_BRIEF} examines the one day ahead predictive power by evaluating the accuracy of the forecasted portfolio weights.  N-KF is the proposed algorithm.  N-KF-int is the algorithm of \cite{hauberg2013unscented} based on the methods of \cite{fletcher2003statistics}, without the unscented transform. N-KF-int computes the Riemannian $\LOG{\mu}{\cdot}$ and $\LOG{\mu}{\cdot}$ maps where $\mu$ is the intrinsic mean to the first 15 observed covariance matrices and N-KF-ext is the same with the mean computed extrinsically (see \cite{bhattacharya2003large} for a detailed study of intrinsic and extrinsic means on Riemannian manifolds).  %
The one-day ahead predicted weights are evaluated both against the next day's optimal portfolio weights using both the $\ell^{2}$ and $\ell^{\infty}$ norms for portfolios with the risk-aversion levels $\gamma=0,0.5,1$.  

According to each of the performance metrics, the forecasted efficient portfolios using the intrinsic conditional expectation introduced in this paper performs best.  An interpretation is that the Euclidean method disregards all the geometric structure, and that the competing non-Euclidean methods do not update their reference points for the $\EXP{}{}$ and $\LOG{}{}$ transformations.  The failure to update the reference point results in progressively degrading performance.  This effect is not as noticeable when the data is static as in \cite{fletcher2003statistics,fletcher2004principal}, however the time-series nature of the data makes the need to update the reference point for the transformations numerically apparent.  
\begin{table}[ht]
	\centering
	\begin{tabular}{|l|rrrr|}
 \hline
& Frob. & Max Modulus & Inf. & Spectral \\ 
\hline
N-KF & 2.425e-04 & 1.988e-04 & 2.700e-04 & 2.345e-04 \\ 
EUC & 5.043e-04 & 4.041e-04 & 5.525e-04 & 4.772e-04 \\ 
N-KF-int & 4.321e-04 & 3.524e-04 & 4.817e-04 & 4.224e-04 \\ 
N-KF-ext & 5.342e-04 & 4.200e-04 & 6.006e-04 & 5.219e-04 \\ 
\hline
	\end{tabular}
	\caption{Comparison of Covariance Matrix Prediction} 
	\label{tab:Cov_Forecast_comaprison}
\end{table}
Table~\ref{tab:Cov_Forecast_comaprison} examines the covariance matrix forecasts of all four methods directly.  The performance metrics considered are the Frobenius, Maximum Modulus, Infinite and Spectral matrix norms of the difference between the forecasted covariance matrix and the realized future covariance matrix of the two stocks closing prices.  

In Table~\ref{tab:Cov_Forecast_comaprison}, all the non-Euclidean methods out-perform the component-wise classical Euclidean forecasts of the one-day ahead predicted covariance matrix.  The prediction of covariance matrices is less sensitive than that of the efficient portfolio weights, this is most likely due to $\left(\bar{1}^{\star}\Sigma^{-1}\bar{1} \right)^{-1}$ term appearing in equation~\eqref{eq_optim_portfolio} which is sensitive to small changes due to the observably small value of $\Sigma$.  
\begin{table}[H]
	\caption{Bootstrapped Adjusted Confidence Intervals for Performance Metrics}
	\label{tab:Bootstrapped}
	\centering
	\begin{subtable}{.5\textwidth}
		\centering
	\begin{tabular}{|l|rrr|}
 \hline
& 95 L & mean & 95 U \\ 
\hline
Frob. & 4.70e-04 & 5.04e-04 & 5.43e-04 \\ 
Max. Mod. & 3.73e-04 & 4.04e-04 & 4.37e-04 \\ 
Inf. & 5.13e-04 & 5.52e-04 & 5.99e-04 \\ 
Spec. & 4.42e-04 & 4.77e-04 & 5.16e-04 \\ 
\hline
	\end{tabular} 
		\caption{Euclidean Kalman Filter}
	\end{subtable}%
	\begin{subtable}{.5\textwidth}
		\centering
		\begin{tabular}{|l|rrr|}
		    \hline
		  & 95 L & mean & 95 U \\ 
		  \hline
Frob. & 2.02e-04 & 2.42e-04 & 3.04e-04 \\ 
Max. Mod. & 1.61e-04 & 1.98e-04 & 2.53e-04 \\ 
Inf. & 2.25e-04 & 2.70e-04 & 3.36e-04 \\ 
Spec. & 1.97e-04 & 2.34e-04 & 2.92e-04 \\ 
		  \hline
		\end{tabular}
		\caption{Asymptotic Non-Euclidean Kalman Filter}
	\end{subtable}
	\begin{subtable}{.5\textwidth}
		\centering
\begin{tabular}{|l|rrr|}
  \hline
& 95 L & mean & 95 U \\ 
\hline
Frob. & 3.91e-04 & 4.32e-04 & 4.68e-04 \\ 
Max. Mod. & 3.18e-04 & 3.52e-04 & 3.91e-04 \\ 
Inf. & 4.38e-04 & 4.81e-04 & 5.26e-04 \\ 
Spec. & 3.82e-04 & 4.22e-04 & 4.64e-04 \\  
\hline
\end{tabular}
		\caption{Non-Updating Intrinsic Barycenter}
	\end{subtable}%
	\begin{subtable}{.5\textwidth}
		\centering
	\begin{tabular}{|l|rrr|}
		\hline
		& 95 L & mean & 95 U \\ 
		\hline
Frob. & 4.94e-04 & 5.34e-04 & 5.75e-04 \\ 
Max. Mod. & 3.88e-04 & 4.20e-04 & 4.56e-04 \\ 
Inf. & 5.55e-04 & 6.00e-04 & 6.46e-04 \\ 
Spec. & 4.82e-04 & 5.21e-04 & 5.65e-04 \\ 
\hline
	\end{tabular}
		\caption{Non-Updating Extrinsic Barycenter}
	\end{subtable}
\end{table}
Tables~\ref{tab:Cov_Forecast_comaprison} and~\ref{tab:Bootstrapped} report $95\%$ confidence intervals about the estimated mean of the one-day ahead mean error of each respective distance measure.  The error distribution of the performance metrics is non-Gaussian according to the Shapiro-Wilks test performed for normality (see \cite{Shapriowilkstest} for details).  The bootstrap adjusted confidence (BAC) interval method of \cite{diciccio1996bootstrap} is used instead to non-parametrically generate the $95\%$-confidence intervals.  The BAC method is chosen since it does not assume that the underlying distribution is Gaussian, it corrects for bias, and it corrects for skewness in the data.  The bootstrapping was performed by re-sampling $10,000$ times from the realized error distributions of the performance metrics.  %

Tables~\ref{tab:MARKOWITZ_BRIEF} and~\ref{tab:Bootstrapped_Confidence_Intervals_Portfolio_Values} show that the N-KF method is the most accurate and has the lowest variance amongst all the methods according to the Frobenius, Maximum modulus, infinity, and spectral matrix norms.  

\begin{table}[H]
	\caption{Bootstrapped Adjusted Confidence Intervals for Performance Metrics of Portfolio Weights One-Day Ahead Predictions}
	\label{tab:Bootstrapped_Confidence_Intervals_Portfolio_Values}
	\centering
	\begin{subtable}{.5\textwidth}
		\centering
			\begin{tabular}{ |r|l|lll|} 
				\hline 
				\hline
				$\gamma$&	& 95 L & Mean & 95 U \\ 
				\hline
				$0$&		$\ell^2$ & 5.93e-01 & 6.17e-01 & 6.40e-01 \\ 
				&			$\ell^{\infty}$  & 5.32e-01 & 5.51e-01 & 5.70e-01 \\ 
				\hline
				$0.5$&			$\ell^2$ & 7.38e-01 & 7.66e-01 & 7.98e-01 \\ 
				&			$\ell^{\infty}$  & 6.59e-01 & 6.86e-01 & 7.14e-01 \\ 
				\hline
				$1$&			$\ell^2$ & 7.37e-01 & 7.69e-01 & 8.00e-01 \\ 
				&			$\ell^{\infty}$ &  6.61e-01 & 6.89e-01 & 7.15e-01 \\ 
				\hline
			\end{tabular} 
		\caption{Euclidean Kalman Filter}
	\end{subtable}%
	\begin{subtable}{.5\textwidth}
		\centering
			\begin{tabular}{ |r|l|lll|} 
				\hline 
				\hline 
				$\gamma$&			& 95 L & Mean & 95 U \\ 
				\hline 
				$0$&			$\ell^2$ &3.49e-01 & 3.71e-01 & 3.92e-01 \\ 
				&			$\ell^{\infty}$ &  3.18e-01 & 3.37e-01 & 3.55e-01 \\ 
				\hline
				$0.5$&			$\ell^2$ &  4.71e-01 & 5.02e-01 & 5.33e-01 \\ 
				&			$\ell^{\infty}$ & 4.33e-01 & 4.61e-01 & 4.89e-01 \\ 
				\hline
				$1$&			$\ell^2$ &4.66e-01 & 4.94e-01 & 5.25e-01 \\ 
				&			$\ell^{\infty}$ & 4.26e-01 & 4.54e-01 & 4.82e-01 \\ 
				\hline 
			\end{tabular} 
		\caption{Non-Euclidean Kalman Filter}
	\end{subtable}
	\begin{subtable}{.5\textwidth}
		\centering
			\begin{tabular}{ |r|l|lll|} 
				\hline 
				\hline 
				$\gamma$&	& 95 L & Mean & 95 U \\ 
				\hline 
				$0$&			$\ell^2$ & 5.51e-01 & 5.72e-01 & 5.95e-01 \\ 
				&			$\ell^{\infty}$ &5.25e-01 & 5.45e-01 & 5.67e-01 \\ 
				\hline
				$0.5$&			$\ell^2$ & 7.45e-01 & 7.77e-01 & 8.07e-01 \\ 
				&			$\ell^{\infty}$ &7.10e-01 & 7.41e-01 & 7.73e-01 \\ 
				\hline
				$1$&			$\ell^2$ &  7.45e-01 & 7.78e-01 & 8.13e-01 \\ 
				&			$\ell^{\infty}$ & 7.13e-01 & 7.41e-01 & 7.74e-01 \\ 
				\hline
			\end{tabular} 
		
		\caption{Non-Updating Intrinsic Barycenter}
	\end{subtable}%
	\begin{subtable}{.5\textwidth}
		\centering
			\begin{tabular}{ |r|l|lll|} 
				\hline 
				\hline 
				$\gamma$&			& 95 L & Mean & 95 U \\ 
				\hline 
				$0$&			$\ell^2$ & 6.70e-01 & 6.89e-01 & 7.07e-01 \\ 
				&	$\ell^{\infty}$ & 5.97e-01 & 6.14e-01 & 6.31e-01 \\ 
				\hline
				$0.5$&	$\ell^2$ & 8.21e-01 & 8.48e-01 & 8.76e-01 \\ 
				&			$\ell^{\infty}$ & 7.29e-01 & 7.55e-01 & 7.78e-01 \\
				\hline
				$1$&			$\ell^2$ & 8.15e-01 & 8.43e-01 & 8.70e-01 \\ 
				&	$\ell^{\infty}$ &  7.26e-01 & 7.51e-01 & 7.76e-01 \\ 
				\hline
			\end{tabular} 
		\caption{Non-Updating Extrinsic Barycenter}
	\end{subtable}
\end{table}
Tables~\ref{tab:MARKOWITZ_BRIEF} and~\ref{tab:Cov_Forecast_comaprison} reflect that the forecasting performance for the efficient portfolio weights of the N-KF method is more accurate than the others. This is again seen in the lower bias and tighter $95\%$ confidence interval reported in the Table~\ref{tab:Bootstrapped_Confidence_Intervals_Portfolio_Values}.  The numerics presented reflect the importance of incorporating relevant geometry to mathematical finance problems.  The  manner in which non-Euclidean geometry is incorporated in numerical procedures influences the effectiveness of the algorithms as indicated by the superior perfomance of the non-Euclidean Kalman filter over other benchmark methods.

The next section summarizes the contributions made in this paper.  

\section{Summary}\label{sec:summary}
In this paper we have considered non-Euclidean generalizations of conditional expectation which naturally model non-Euclidean features present in probabilistic models.  The need to incorporate relevant geometric information into probabilistic estimation procedures within mathematical finance was motivated by the geometry of efficient portfolios. The connection between geometry and mathematical finance has also been explored in \cite{bjork1999interest},\cite{filipovic2000exponential,filipovic2004geometry},\cite{harms2016consistent}, \cite{henry2005general},\cite{han2015geometric},\cite{bayraktar2006projecting},\cite{brody2004chaos}, and \cite{AFReg,NEU}.   Non-Euclidean filtering was seen to outperform traditional Euclidean filtering methods with the estimates presenting lower prediction errors.

The numerical procedure was justified by Theorem~\ref{thrm_main} which proved the equivalence and existence of common formulations of intrinsic conditional expectation to transformations of a specific Euclidean conditional expectation.  These results were established using the variational-calculus theory of $\Gamma$-convergence introduced in \cite{DeGiorgiGammaConvergence} and subsequently developed by \cite{braides2006handbook} by temporarily passing through the larger $\LLp{\gggg_{\bigcdot}}{p}$-spaces.  To our knowledge, these are novel proofs techniques within the field of mathematical finance and applied probability theory.

A central consequence of Theorem~\ref{thrm_main} is the potential to write down computable stochastic filtering equations for the dynamics of the intrinsic conditional expectation on $(\mmm,g)$ using classical Euclidean filtering equations.  Our results differed from those of \cite{ng1984nonlinear}, \cite{FilteringDuncan}, or \cite{Elworthygeofilt} since dynamics for an intrinsic conditional expectation are forecasted and not dynamics of the Euclidean conditional expectation of a function of a non-Euclidean signal and/or observation process.  Likewise, out results did not rely on the Le Jan-Watanabe connection as those of \cite{LeJanwAtanabeFilteringSalemSIAM} and the only computational bottleneck may be to compute the Riemannian Logarithm and Riemannian Exponential maps.  However, these are readily available in many well-studied geometries not discussed in this paper, for example the hyperbolic geometry used to study the $\lambda$-SABR models in \cite{henry2008analysis}.  

Many other naturally occurring spaces in mathematical finance have the required properties for the central theorems of this paper to apply.  For instance the geometry of two-factor stochastic volatility models developed in \cite{henry2008analysis} do.  The techniques developed here can find applications to that geometry and other relevant geometries in mathematical finance and could find many other areas of applied probability theory where standard machine learning methods have been used extensively.  

\bibliographystyle{abbrvnat}
\bibliography{Mega_Refs}
\printindex

\begin{appendices}	
	\section{Proofs}\label{app:proofs}
	In this, section technical proofs or results from the main body of this section are given.  
	\subsection{Markowitz Space Proof}
	\begin{proof}[Proof of Proposition~\ref{prop_Mark_space_properties}]
		In general, for any three Riemannian manifolds $(M,g^M), (N,g^N)$, $(\tilde{N},g^{\tilde{N}})$, there is a natural bundle-isomorphism $T(M \times N\times \tilde{N}) \cong TM \times TN\times T\tilde{N}$ (see \citep[Section 2.1]{jost2008riemannian} for a discussion on vector bundles).  Under this identification, define the metric on $T(M \times N\times \tilde{N})$ as follows for each $(p,q,r) \in M \times N\times \tilde{N}$.
		$$g^{M\times N\times \tilde{N}}_{(p,q,r)} \colon T_{(p,q,r)}(M \times N\times \tilde{N}) \times T_{(p,q,r)}(M \times N\times \tilde{N}) \to \mathbb{R},$$
		$$((x_1,y_1,z_1),(x_2,y_2,z_3)) \mapsto g^M_p(x_1,x_2,z_3) + g^N_q(y_1,y_2,z_3) + g^{\tilde{N}}_r(y_1,y_2,z_3).$$
		
		Let $\nabla^{M\times N\times \tilde{N}}$ be the Levi-Civita connection on the product of two Riemannian manifolds, then $\nabla^{M\times N} = \nabla^M + \nabla^N + \nabla^{\tilde{N}}$.  Therefore for if $\gamma^M,\gamma^N,\gamma^{\tilde{N}}$ are geodesics on $M$, $N$, $\tilde{N}$ respectively then
		$$
		\nabla^{M\times N\times\tilde{N}}\overset{\cdot}{
			{(\gamma^M,\gamma^N,\gamma^{{\tilde{N}}})}	
		}
		=
		\nabla^M\dot{\gamma}^M + \nabla^N\dot{\gamma}^N + \nabla^{\tilde{N}}\dot{\gamma}^{\tilde{N}}= 0+0+0=0,
		$$
		whence $M\times N\times \tilde{N}$-valued curve $t\mapsto \left(\gamma^M(t),\gamma^N(t)\right)$ is a geodesic on the product Riemannian manifold.  Therefore geodesics, and hence the $\EXP{}{}$ as well as the $\LOG{}{}$ maps can be expressed component-wise on the product Riemannian manifold.  Particularizing $M,N,\tilde{N}$ to $\rr,\rrD$, and $\mathscr{P}_D^+$ implies that the Markowitz space is a well-defined Riemannian manifold.  The formula for $d^{Mrk}$ is just the formula for the product metric between metric spaces.  
		
		Using the natural isomorphism discusses above, the sectional curvature of the product Riemannian is the sum of the sectional curvatures.  Since Euclidean space has $0$-sectional curvature and $\mathscr{P}_D^+$ has non-positive sectional curvature (see\cite{bonnabel2009riemannian}), then the Markowitz space has non-positive sectional curvature.  
		
		The general linear group $GL_D(\rr)$ has two connected components corresponding to the matrices with negative or positive determinant.  Since $\mathscr{P}_D^+$ is a subset comprised of matrices with strictly positive eigenvalues, its elements all have a strictly positive determinant.  Therefore $\mathscr{P}_D^+$ is simply connected.  Since each of the component spaces of the Markowitz space is geodesically complete (see \cite{bonnabel2009riemannian} for the statement concerning $\mathscr{P}_D^+$) the Hopf-Rinow Theorem implies that the associated metric space is complete.  The non-positive curvature of the Markowitz space together with the Cartan-Hadamard Theorem imply that the Riemannian exponential map at every point of the Markowitz space is a diffeomorphism onto the $\rrflex{1+D + \frac{D(D+1)}{2}}$, where $\frac{D(D+1)}{2}$ is the dimension of the $\mathscr{P}_D^+$.  The dimension is obtained by counting the entries on and above the main diagonal of a \textit{symmetric} matrix.  	
	\end{proof}
	\subsection{Derivation of Filtering Equations}
	\begin{proof}[Proof of Corollary~\ref{cor_Main_2}]
		Denote the conditional expectation $\eecond{\tilde{X}_t^i}{\gggt}$ by $X_t^i$.  The filtering equations of \citep[Remark 22.1.15]{CohenAndElliotStochAppl} imply that each of the conditional mean of each locally linearized coordinate processes $\tilde{X}_t^i$ given the filtration $\gggt^i$ is
		\small
		\begin{align}
		X_t^i =& \eecond{X_0^i}{\gggg_0^i} + \int_0^t \eecond{f^i(X_u)}{\gggg_u^i} du%
		+%
		\int_0^u \left(
		\eecond{\tilde{X}_u^ic^i}{\gggg_u^i}
		-
		\eecond{\tilde{X}_u^i}{\gggg_u^i}\eecond{c^i(\tilde{X}_u^i)}{\gggg_u^i}
		\right)
		dV_u
		\label{eq_filtering_Euclideanized_components}
		\end{align}\normalsize
		where the innovations processes $V_t^i$ and the optional projections of $c^i$ are defined by
		$$
		\begin{aligned}
		V_t^i \triangleq & \int_0^t \alpha(u,Y_u)^{-1}dY_u
		- \int_0^t \alpha^i(s,\tilde{Y}_u^i)^{-1}\left(\hat{c}^i(\omega,u,\tilde{Y}_u^i)\right)du\\
		\hat{c}^i(\omega,t,y)\triangleq& \eecond{c^i(t,X_t,y)}{\gggt^i},
		\end{aligned}
		$$
		(see \citep[Chapter 22.10]{CohenAndElliotStochAppl} for more details on the innovations process and \citep[Chapter 7.6]{CohenAndElliotStochAppl} for more details on optional projections). 
		
		Abbreviate $\eecondg{X_t}{\gggt}$ by $Xt^g$
		Applying the It\^{o}-Lemma to the (smooth) function 
		$$
		x\mapsto \left\langle\EXP{X^g_{t-\Delta}}{\sum_{i=1}^d x},e_i\right\rangle_{\rrd}
		$$
		to the process $\sum_{i=1}^dXte_i$ yields
		\begin{align}
		\nonumber
		\langle\EXP{X^g_{t-\Delta}}{\sum_{i=1}^d X_t^i},e_i\rangle_{\rrd} = &
		\langle\EXP{X^g_{t-\Delta}}{\sum_{i=1}^d X_0^i},e_i\rangle_{\rrd} 
		\\
		\nonumber
		+& 
		\int_0^t \sum_{i=1}^d 
		\left\langle
		\frac{\partial}{\partial x_i}
		\EXP{X^g_{t-\Delta}}{\sum_{i=1}^d X_t^i},e_i\right\rangle_{\rrd}
		\,d\hat{X}^i_t 
		\\
		\nonumber
		+& 
		\frac{1}{2}\int_0^t \sum_{i,j=1}^d 
		\frac{\partial^2}{\partial x_ix_j}
		\left\langle
		\EXP{X^g_{t-\Delta}}{\sum_{i=1}^d X_t^i},e_i
		\right\rangle_{\rrd}
		\,d[\hat{X}^i,\hat{X}^j]_t
		\\
		\nonumber
		= &
		\langle\EXP{X^g_{t-\Delta}}{\sum_{i=1}^d X_0^i},e_i\rangle_{\rrd} 
		\\
		\nonumber
		+& 
		\int_0^t \left[
		\sum_{i=1}^d 
		\left\langle
		\frac{\partial}{\partial x_i}
		\EXP{X^g_{t-\Delta}}{\sum_{i=1}^d X_t^i},e_i\right\rangle_{\rrd}
		\eecond{f^i(X_u)}{\gggg_u^i}
		\right.
		\\
		\nonumber
		+&\left.
		\frac{1}{2}\sum_{i,j=1}^d 
		\left\langle
		\frac{\partial^2}{\partial x_ix_j}
		\EXP{X^g_{t-\Delta}}{\sum_{i=1}^d X_t^i},e_i
		\right\rangle_{\rrd}
		\Xi_u^{i,j}
		\right]
		du
		\\
		+& 
		\int_0^t
		\sum_{i=1}^d 
		\left\langle
		\frac{\partial}{\partial x_i}
		\EXP{X^g_{t-\Delta}}{\sum_{i=1}^d X_t^i},e_i\right\rangle_{\rrd}
		\eecond{f^i(X_u)}{\gggg_u^i}
		dV_u
		,
		\end{align}
		where the processes $\Xi_t^{i,j}$ is defined by
		\footnotesize$$
		\Xi_t^{i,j} \triangleq \left(
		\eecond{\tilde{X}_u^ic^i}{\gggg_u^i}
		-
		\eecond{\tilde{X}_u^i}{\gggg_u^i}\eecond{c^i(\tilde{X}_u^i)}{\gggg_u^i}
		\right)
		\left(
		\eecond{X_u^jc^j}{\gggg_u^j}
		-
		\eecond{X_u^j}{\gggg_u^j}\eecond{c^i(X_u^j)}{\gggg_u^j}
		\right)
		.
		$$\normalsize
		The results follow by applying Theorem~\ref{thrm_main} and the Optional Projection \citep[Theorem 7.6.2]{CohenAndElliotStochAppl}.
	\end{proof}

	\subsection{Proof of Theorem~\ref{thrm_main}}
	We return to the proof of Theorem~\ref{thrm_main}.  This will require moving to a slightly larger space where things become more manageable.  %

	\begin{defn}[The $\LLp{\gggg_{\bigcdot}}{p}$ Spaces]\label{defn_Lpbig}
		Let $\tilde{L}_{\pp}^p\left(\gggg_{\bigcdot};\mmm\right)$ denote the subset of the disjoint union $\coprod_{t \in \rr} L_{\pp}^p\left(\gggg_{t\vee 0};\mmm\right)$ consisting of all families $\{X_t\}_{t \in \rr}$ satisfying
		$$
		t \mapsto X_t(\omega) \in D(\rr;\mmm,d_g) ; \pp-a.s.
		$$
		The natural topology on $\tilde{L}_{\pp}^p\left(\gggg_{\bigcdot};\mmm\right)$ induced by these operations will be denoted by $\tau_0$.  
		
		Refine the topology on $\tilde{L}_{\pp}^p\left(\gggg_{\bigcdot};\mmm\right)$ into the coarsest topology on $\tilde{L}_{\pp}^p\left(\gggg_{\bigcdot};\mmm\right)$ satisfying
		\begin{enumerate}[(i)]
			\item $\tau$ is no coarser than the topology on $\tilde{L}_{\pp}^p\left(\gggg_{\bigcdot};\mmm\right)$,
			\item $\{Z_t^n\}_{n \in \nn}$ converges to an element of $\tilde{L}_{\pp}^p\left(\gggg_{\bigcdot};\mmm\right)$ if and only if it converges to $Z_t$ with respect to $\tau$ and $\{Z_{t -\frac1{n}}^n\}_{n \in \nn}$ converges to $Z_t$ in $\tau$.  
		\end{enumerate}
		The one-point compactification of $\tilde{L}_{\pp}^p\left(\gggg_{\bigcdot};\mmm\right)$ is denoted by $\LLp{\gggg_{\bigcdot}}{p}$, the new point, denoted by $\infty$ is called the escape point.  Elements of $\LLp{\gggg_{\bigcdot}}{p}$ are called eternal processes and are denoted by $Z_{\bigcdot}$.  
	\end{defn}
	\begin{remark}\label{ref_rel_top}
		Since $L_{\pp}^p(\gggt;\mmm)$ is a topological subspace of $\LLp{\gggg_{\bigcdot}}{p}$ then it inherits a relative topology.  The indistinguishability discussed in Theorem~\ref{thrm_main} is with respect to this relative topology.  
	\end{remark}
	\begin{remark}[Escape Point]\label{remark_esapce}
		The escape point $\infty$ is interpreted as describing the eternal processes which either fail the finiteness condition of equation~\eqref{eq_defn_LebesgueIntrinsicFinite} or fail to take values in $\mmm$ at a given point in time $\pp$-a.s.  
	\end{remark}
	\begin{remark}[Points in $\LLp{\gggg_{\bigcdot}}{p}$ are Eternal and May Explode]\label{remark_exploding}

		Every element of $\LLp{\gggg_{\bigcdot}}{p}$ is indexed by the time $t$ which takes values in $\rr$ and not only in $[0,\infty)$.  The time $t=0$ is interpreted as when the observer first gained information of the process.  In this way the part above time $t=0$ is a process which may explode arbitrary number of times and the part below is interpreted as a \textit{pre-history} to an observer at time $t=0$.  In this way, processes in $\LLp{\gggg_{\bigcdot}}{p}$ are thought of as eternal.  Note that the eternal process $\exTflex{X}{T}_t$ is $\gggg_{t\wedge T}$-adapted.  
	\end{remark}
	\begin{lem}[Existence]\label{lem_exist}
		The space $\LLp{\gggg_{\bigcdot}}{p}$ exists and is unique up to homeomorphism.  Moreover, $\tilde{L}_{\pp}^p\left(\gggg_{\bigcdot};\mmm\right)$ is dense in $\LLp{\gggg_{\bigcdot}}{p}$.\footnote{These spaces also exhibit universal properties that follow directly from those of the Alexandroff one-point compactification used to construct them, but they are besides the central focus of this section and so will not be discussed here.  }  
	\end{lem}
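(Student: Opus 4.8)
The plan is to follow the two-stage construction in Definition~\ref{defn_Lpbig}: first exhibit the refined topology $\tau$ on $\tilde{L}_{\pp}^p\left(\gggg_{\bigcdot};\mmm\right)$ as a bona fide object, then form its Alexandroff one-point compactification, and finally deduce density from non-compactness of the original space. The Alexandroff step and the density step are essentially formal; the real content is in showing that ``the coarsest topology satisfying (i) and (ii)'' names an actual topology.

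For the refined topology I would work in the lattice of topologies on $\tilde{L}_{\pp}^p\left(\gggg_{\bigcdot};\mmm\right)$. Let $\mathfrak{T}$ be the set of topologies $\sigma$ that refine $\tau_0$ and satisfy the convergence requirement (ii). Two points must be settled. First, $\mathfrak{T}\neq\emptyset$: one produces an explicit member, the natural candidate being the topology generated by $\tau_0$ together with, for each sequence that $\tau_0$-converges to some $Z_t$ but whose shifted diagonal $\{Z_{t-\frac1n}^n\}_n$ fails to, an extra open set separating $Z_t$ from a tail of that sequence; one then checks this generated topology still refines $\tau_0$ and verifies (ii) directly. Second, $\mathfrak{T}$ is closed under arbitrary intersections: if $\sigma_\alpha\in\mathfrak{T}$ for all $\alpha$ then $\bigcap_\alpha\sigma_\alpha\supseteq\tau_0$ is immediate, and the substantive implication in (ii) --- that a sequence which converges in $\tau_0$ still converges, together with its diagonal, in the refinement --- passes to $\bigcap_\alpha\sigma_\alpha$ because convergence in each of the finer topologies $\sigma_\alpha$ entails convergence in the coarser intersection, while the converse implication of (ii) is automatic for any refinement of $\tau_0$. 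Hence $\tau:=\bigcap\mathfrak{T}$ is the unique coarsest element of $\mathfrak{T}$; being characterised canonically, it also settles uniqueness of the construction so far.

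With $\left(\tilde{L}_{\pp}^p\left(\gggg_{\bigcdot};\mmm\right),\tau\right)$ in hand, $\LLp{\gggg_{\bigcdot}}{p}$ is its Alexandroff one-point compactification: adjoin a point $\infty$ and declare a set open precisely when it is $\tau$-open or of the form $\{\infty\}\cup\left(\tilde{L}_{\pp}^p\left(\gggg_{\bigcdot};\mmm\right)\setminus K\right)$ for $K$ closed and compact. This is a topology and the resulting space is compact, and any two one-point compactifications of a fixed space are canonically homeomorphic over that space --- the universal property alluded to in the footnote --- which gives existence and uniqueness up to homeomorphism. For density it suffices that $\{\infty\}$ is not open, equivalently that $\tilde{L}_{\pp}^p\left(\gggg_{\bigcdot};\mmm\right)$ is not compact, for then every neighbourhood of $\infty$ meets the original space and $\overline{\tilde{L}_{\pp}^p\left(\gggg_{\bigcdot};\mmm\right)}=\LLp{\gggg_{\bigcdot}}{p}$. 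Non-compactness already holds for the coarser topology $\tau_0$: the constant processes form a subspace of $\tilde{L}_{\pp}^p\left(\gggg_{\bigcdot};\mmm\right)$ which is closed and homeomorphic to $(\mmm,g)$ (the $L^p_{\pp}$-metric restricts to $d_g$ on it, since $\pp$ is a probability measure), and $(\mmm,g)$ is a non-compact Cartan--Hadamard manifold; a topology finer than $\tau_0$, such as $\tau$, is then non-compact a fortiori.

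The main obstacle is the middle step --- making ``the coarsest topology satisfying (i)--(ii)'' precise. Concretely one must (a) rule out the degenerate possibility that no refinement of $\tau_0$ satisfies (ii) by writing one down explicitly, and (b) check that the defining convergence property, including the bookkeeping of the shifted diagonals $\{Z_{t-\frac1n}^n\}_n$, is preserved under the intersection used to extract the minimal topology. Everything downstream --- compactness of the Alexandroff extension, its uniqueness, and the density of $\tilde{L}_{\pp}^p\left(\gggg_{\bigcdot};\mmm\right)$ --- is then routine.
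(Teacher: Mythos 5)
Your proposal follows essentially the same route as the paper: obtain the refined topology as the intersection of all topologies refining $\tau_0$ and satisfying (ii) (after exhibiting one member of that family), then take the Alexandroff one-point compactification of $\tilde{L}_{\pp}^p\left(\gggg_{\bigcdot};\mmm\right)$, with uniqueness and density read off from the standard properties of that compactification. The differences are only in detail: the paper's witness for non-emptiness is simply the discrete topology (simpler than your generated-topology candidate), while your explicit non-compactness argument via constant processes, justifying density, is a point the paper leaves implicit.
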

	\begin{proof}
		The uniqueness and the density of $\tilde{L}_{\pp}^p\left(\gggg_{\bigcdot};\mmm\right)$ are in $\LLp{\gggg_{\bigcdot}}{p}$ the properties of the one-point compactification.  
		
		Let $\tau_0$ denote the topology on $\tilde{L}_{\pp}^p\left(\gggg_{\bigcdot};\mmm\right)$.  Let $\mathscr{T}$ denote the set of topologies containing $\tau_0$ and for which $(ii)$ holds.  $\mathscr{T}$ is non-empty since the discrete topology satisfies both $(i)$ and $(ii)$.  Since the intersections of topologies is again a topology (see \citep[page 55 Problem A.a]{Kelleytop}) then the topology on $\LLp{\gggg_{\bigcdot}}{p}$ exists and is 
		$
		\cap_{\tau \in \mathscr{T}} \tau.
		$	
		Existence follows from the existence of the one point-compactification of the topological space $\left(\tilde{L}_{\pp}^p\left(\gggg_{\bigcdot};\mmm\right),\cap_{\tau \in \mathscr{T}} \tau\right)$.  
	\end{proof}
	The Riemannian Log and Riemannian Exponential maps extend to a correspondence between $\LLp{\gggg_{\bigcdot}}{p}$ and $\mathbb{L}_{\pp}^p\left(\gggg_{\bigcdot};\rrd\right)$.  To see this consider the maps
	\begin{align*}
	\LOGps{}{}:\LLp{\gggg_{\bigcdot}}{p} \times \LLp{\gggg_{\bigcdot}}{p} &\rightarrow \mathbb{L}_{\pp}^p\left(\gggg_{\bigcdot};\rrd\right)
	\\
	\LOGps{Z_{\bigcdot}}{Y_{\bigcdot}} &\mapsto 
	\left\{
	\begin{cases}
	0  &: Z_{\bigcdot} = Y_{\bigcdot} = \infty\\
	\LOG{Z_t}{Y_t}  &: Z_{\bigcdot} \mbox{ and } Y_{\bigcdot} \neq \infty\\
	\infty & :\mbox{else}
	\end{cases}
	\right\}_{t \in \rr}
	,
	\\
	\EXPps{}{}:\LLp{\gggg_{\bigcdot}}{p} \times \mathbb{L}_{\pp}^p\left(\gggg_{\bigcdot};\rrd\right) &\rightarrow \LLp{\gggg_{\bigcdot}}{p}
	\\
	\EXPps{Z_{\bigcdot}}{Y_{\bigcdot}} &\mapsto 
	\left\{
	\begin{cases}
	\EXP{Z_t}{Y_t}  &: Z_{\bigcdot} \neq \infty \mbox{ and } Y_{\bigcdot} \neq \infty\\
	\infty & :\mbox{else}
	\end{cases}
	\right\}_{t \in \rr}
	.
	\end{align*}
	Both these maps collapse to component-wise post-composition by $\LOG{}{}$ (resp. $\EXP{}{})$ if the eternal process $Z_{\bigcdot}$ never hits $\infty$.  
	
	The map $d_g(\cdot,\cdot)$ also induces a map from $\LLp{\gggg_{\bigcdot}}{p}\times \LLp{\gggg_{\bigcdot}}{p}$ into $[0,\infty]$.  The induced map, denoted by ${D_g(\cdot,\cdot)}$ is defined by
	$$
	Z_{\bigcdot} \mapsto \begin{cases}
	d_g(Z_t,X_t) & :\mbox{ if } X_{\bigcdot} \mbox{ and } Z_{\bigcdot} \neq \infty\\
	\infty & : \mbox{else}
	.
	\end{cases}
	$$
	All of these collapse to their usual definitions when the escape point is not encountered.  They will play a key technical role for the remainder of this section.  
	\begin{lem}\label{lem_Gamma_convergence}
		For every $1\leq p<\infty$ and every sub-filtration $\gggg_{\bigcdot}$ of $\fff_{\bigcdot}$, the functionals
		$$
		F_n(Z_{\bigcdot})
		\triangleq
		\int_{t \in \rr}
		\eep{
			\left\|
			\LOGps{Z_{t-\frac1{n}}}{Z_{t}}
			- 
			\LOGps{Z_{t-\frac1{n}}}{X_t}
			\right\|_2^p
		}dt,
		$$
		$\Gamma$-converges to the functional
		$$
		F(Z_{\bigcdot})\triangleq \int_{t \in \rr}
		\eep{
			D_g^p(Z_t,X_t)
		}
		dt
		$$
		on $\LLp{\gggg_{\bigcdot}}{p}$.  
	\end{lem}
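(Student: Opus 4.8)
I will check the two defining conditions of $\Gamma$-convergence (in the sequential sense, for which the topology of Definition~\ref{defn_Lpbig} is tailored): (i) for every $Z_{\bigcdot}$ and every $Z^n_{\bigcdot}\to Z_{\bigcdot}$ in $\LLp{\gggg_{\bigcdot}}{p}$ one has $\liminf_n F_n(Z^n_{\bigcdot})\ge F(Z_{\bigcdot})$; (ii) for every $Z_{\bigcdot}$ there exists $Z^n_{\bigcdot}\to Z_{\bigcdot}$ with $\limsup_n F_n(Z^n_{\bigcdot})\le F(Z_{\bigcdot})$. (If the fixed reference process $X_{\bigcdot}$ equals the escape point then $F_n\equiv F\equiv+\infty$ and there is nothing to prove, so assume $X_{\bigcdot}\neq\infty$.) Two inputs drive everything. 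First, a c\`adl\`ag path $s\mapsto Z_s(\omega)$ has at most countably many jumps, so by Tonelli the backward time-shifts of $Z_{\bigcdot}$ converge to $Z_{\bigcdot}$ for $m\otimes\pp$-a.e.\ $(t,\omega)$; hence the $\tfrac1{n}$ hard-wired into $F_n$ disappears after integration in $t$. Second, every manifold considered is Cartan--Hadamard, hence $\mathrm{CAT}(0)$, so the Riemannian logarithm contracts distances: $\|\LOG{p}{x}-\LOG{p}{y}\|_2\le d_g(x,y)$ (Alexandrov angles do not exceed Euclidean comparison angles; cf.\ \cite{NonposCurvBall}), while $\|\LOG{p}{x}\|_2=d_g(p,x)$, $\LOG{p}{p}=0$, and $(p,x)\mapsto\LOG{p}{x}$ is jointly continuous, $\LOGps{}{}$ agreeing with it componentwise off the escape point. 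Joint continuity together with c\`adl\`ag regularity also makes the integrands of $F_n$ jointly measurable, legitimising the use of Tonelli and Fubini below.

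\textbf{Recovery inequality.} Fix $Z_{\bigcdot}\in\LLp{\gggg_{\bigcdot}}{p}$. If $Z_{\bigcdot}=\infty$ then $F_n(\infty)=F(\infty)=+\infty$ for all $n$ (by the conventions attached to the escape point), so the constant sequence works; if $F(Z_{\bigcdot})=+\infty$ there is nothing to prove. So assume $Z_{\bigcdot}\neq\infty$ and $F(Z_{\bigcdot})<\infty$, and take $Z^n_{\bigcdot}\equiv Z_{\bigcdot}$. This sequence converges to $Z_{\bigcdot}$ because, for a constant sequence, condition (ii) of Definition~\ref{defn_Lpbig} reduces to the convergence of the backward time-shifts of $Z_{\bigcdot}$ to $Z_{\bigcdot}$, valid by c\`adl\`ag regularity. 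For $m\otimes\pp$-a.e.\ $(t,\omega)$, continuity of $\LOG{}{}$ and $\LOG{Z_t}{Z_t}=0$ give
\[
\|\LOGps{Z_{t-\frac1{n}}}{Z_t}-\LOGps{Z_{t-\frac1{n}}}{X_t}\|_2^p\ \longrightarrow\ \|\LOG{Z_t}{X_t}\|_2^p=d_g^p(Z_t,X_t),
\]
and the contraction bound supplies the $n$-uniform dominating function $\|\LOGps{Z_{t-\frac1{n}}}{Z_t}-\LOGps{Z_{t-\frac1{n}}}{X_t}\|_2^p\le d_g^p(Z_t,X_t)\in L^1(m\otimes\pp)$. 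By dominated convergence $F_n(Z_{\bigcdot})\to F(Z_{\bigcdot})$, which is stronger than (ii).

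\textbf{liminf inequality.} Let $Z^n_{\bigcdot}\to Z_{\bigcdot}$ and put $a:=\liminf_n F_n(Z^n_{\bigcdot})$; assume $a<\infty$. Suppose first $Z_{\bigcdot}\neq\infty$; then $Z^n_{\bigcdot}\neq\infty$ eventually, since the space is Hausdorff. Pass to a subsequence realising $a$, and then to a further subsequence along which, by condition (ii) of Definition~\ref{defn_Lpbig}, both $Z^n_t(\omega)\to Z_t(\omega)$ and $Z^n_{t-\frac1{n}}(\omega)\to Z_t(\omega)$ for $m\otimes\pp$-a.e.\ $(t,\omega)$. Joint continuity of $\LOG{}{}$ then yields, $m\otimes\pp$-a.e.,
\[
\|\LOGps{Z^n_{t-\frac1{n}}}{Z^n_t}-\LOGps{Z^n_{t-\frac1{n}}}{X_t}\|_2^p\ \longrightarrow\ \|\LOG{Z_t}{X_t}\|_2^p=d_g^p(Z_t,X_t),
\]
so Tonelli together with one application of Fatou's lemma on $m\otimes\pp$ give
\[
a=\lim_n\int_{\rr}\eep{\|\LOGps{Z^n_{t-\frac1{n}}}{Z^n_t}-\LOGps{Z^n_{t-\frac1{n}}}{X_t}\|_2^p}\,dt\ \ge\ \int_{\rr}\eep{d_g^p(Z_t,X_t)}\,dt=F(Z_{\bigcdot}).
\]

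\textbf{Escape point and the main obstacle.} It remains to treat $Z_{\bigcdot}=\infty$ in (i): here $F(\infty)=+\infty$ (by the definition of $D_g$), so we must show $F_n(Z^n_{\bigcdot})\to+\infty$ for every $Z^n_{\bigcdot}\to\infty$. If instead $F_n(Z^n_{\bigcdot})\le C$ along a subsequence, then each such $Z^n_{\bigcdot}\neq\infty$, and combining the reverse triangle estimate $\|\LOGps{Z_{t-\frac1{n}}}{Z_t}-\LOGps{Z_{t-\frac1{n}}}{X_t}\|_2\ge d_g(Z_t,X_t)-2\,d_g(Z_{t-\frac1{n}},Z_t)$ with c\`adl\`ag regularity and the shift-compatibility built into $\tau$ confines the family $\{Z^n_{\bigcdot}\}$ to a compact subset of $\tilde{L}^{p}_{\pp}(\gggg_{\bigcdot};\mmm)$, contradicting $Z^n_{\bigcdot}\to\infty$ in the one-point compactification. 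I expect this coercivity step at the escape point---and, more basically, making the passage from abstract $\tau$-convergence to $m\otimes\pp$-a.e.\ subsequential convergence precise, i.e.\ identifying the ``natural topology'' on $\tilde{L}^{p}_{\pp}(\gggg_{\bigcdot};\mmm)$ as an $L^p$- or convergence-in-measure-type topology---to be the only genuinely delicate points; the $\mathrm{CAT}(0)$ geometry and the Fatou/dominated-convergence arguments are routine and uniform in $p\in[1,\infty)$ and in the sub-filtration $\gggg_{\bigcdot}$.
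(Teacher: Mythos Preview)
Your proof follows essentially the same route as the paper's---Fatou's lemma for the liminf inequality, a reverse estimate for the limsup, and the identity $\|\mathrm{Log}_p(x)\|_2=d_g(p,x)$ to identify the pointwise limit---but you supply one ingredient the paper omits: the $\mathrm{CAT}(0)$ contraction inequality $\|\mathrm{Log}_p(x)-\mathrm{Log}_p(y)\|_2\le d_g(x,y)$, which furnishes an $n$-uniform dominating function. The paper instead applies Reverse Fatou to an \emph{arbitrary} converging sequence without ever producing a dominator, so your dominated-convergence argument along the constant recovery sequence is the cleaner and more complete version of the same idea. Your closing caveats---coercivity at the escape point and the passage from abstract $\tau$-convergence to $m\otimes\pp$-a.e.\ subsequential convergence---are not weaknesses peculiar to your argument; the paper's proof leaves the same points open, simply asserting that ``the choice of topology on $\LLp{\gggg_{\bigcdot}}{p}$'' makes the pointwise-limit computation go through.
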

	\begin{proof}
		Let $Z_{\bigcdot}$ be an element of $\LLp{\gggt}{p}$, $\{Z_{\bigcdot}^n\}_{n \in \nn}$ be a sequence converging to $Z_{\bigcdot}$ in $\LLp{\gggt}{p}$ and $X_{\bigcdot}$ be an element of $\LLp{\ffft}{p}$.  For every $t\in \rr$, Reverse Fatou's Lemma implies that
		\normalsize	\begin{align}
		\limsup{n \mapsto \infty} & \int_{t \in \rr}\eep{
			\left\|
			\LOGps{Z_{t-\frac1{n}}^n}{Z_{t}^n}
			- 
			\LOGps{Z_{t-\frac1{n}}^n}{X_t}
			\right\|_2^p
		}
		dt\\
		\leq &\int_{t \in \rr}
		\eep{
			\limsup{n \mapsto \infty}
			\left\|
			\LOGps{Z_{t-\frac1{n}}^n}{Z_{t}^n}
			- 
			\LOGps{Z_{t-\frac1{n}}^n}{X_t}
			\right\|_2^p
		}dt
		\label{eq_lae1}
		\end{align}\normalsize
		The continuity of $\|\cdot\|_2^2$, $\LOG{}{}$, and the $\pp$-a.s.\ continuity of the path $t \mapsto Z_{t}(\omega)$ and the choice of topology on $\LLp{\gggg_{\bigcdot}}{p}$ implies that the limit on the RHS of equation~\eqref{eq_lae1} exists and can be computed to be
		\begin{align}
		\nonumber
		\limsup{n \mapsto \infty}& \int_{t \in \rr}\eep{
			\left\|
			\LOGps{Z_{t-\frac1{n}}^n}{Z_{t}^n}
			- 
			\LOGps{Z_{t-\frac1{n}}^n}{X_t}
			\right\|_2^p
		}dt\\
		\leq &
		\int_{t \in \rr}
		\eep{
			\left\|
			\LOGps{Z_{t}}{Z_{t}}
			- 
			\LOGps{Z_{t}}{X_t}
			\right\|_2^p
		}dt\\
		= &
		\nonumber
		\int_{t \in \rr}
		\eep{
			\left\|
			\LOGps{Z_{t}}{X_t}
			\right\|_2^p
		}dt\\
		= &
		\int_{t \in \rr}
		\eep{
			D_g^p(Z_t,X_t)
		}
		dt
		.
		\label{eq_fixing0}
		\end{align}
		Here the fact that $\LOGps{x}{x}=0$ was used along with the relationship between the Riemannian Logarithm and the Riemannian metric, as exemplified in $\mathscr{P}_D^+$ by equation~\eqref{eq_log_PD_pluss}.  
		\normalsize
		Analogously, by the ordinary Fatou's Lemma
		\begin{align}
		\int_{t \in \rr}
		\eep{
			D_g^p(Z_t,X_t)
		}dt
		\leq &
		\liminf{n \mapsto \infty}
		\int_{t \in \rr}
		\eep{
			\left\|
			\LOGps{Z_{t-\frac1{n}}^n}{Z_{t}^n}
			- 
			\LOGps{Z_{t-\frac1{n}}^n}{X_t}
			\right\|_2^p
		}
		dt
		.
		\label{eq_fixing1}
		\end{align}\normalsize
		By the definition of $\Gamma$-convergence, $F$ is the $\Gamma$-limit of the functionals $F_n$ on $\LLp{\gggg_{\bigcdot}}{p}$.  
	\end{proof}
	\begin{ass}\label{ass_finite}
		Both $X_{\bigcdot}^g,X_{\bigcdot}\neq \infty$.  
	\end{ass}
	The proof of Theorem~\ref{thrm_main} relies on a result of central interest in the theory of $\Gamma$-convergence.  This results \citep[Theorem 7.8]{GammaMaso}, is also called the Fundamental Theorem of $\Gamma$-convergence in \citep[Theorem 2.10]{braides2006handbook} in the metric space formulation.  It may be reformulated as stating that if a sequence of functionals $F_n$ $\Gamma$-converges to a functional on a compact topological space\footnote{The assumption of compactness is a special case of the statement which only requires \textit{equicoercivity}.  } $X$, then it must satisfy
	\begin{equation}
	\min_{x \in X} \GamLim{n \mapsto \infty} F_n(x) = \lim\limits_{n \mapsto \infty} \inf_{x \in X} F_n(x)
	.
	\label{fundamental_theorem_I_likeitsosooooomuch}
	\end{equation}
	\begin{proof}[Proof of Theorem~\ref{thrm_main}]
		Lemma~\ref{lem_IC_Comp} established the required $\Gamma$-convergence between the discussed functionals on the compact topological space $\LLp{\gggg_{\bigcdot}}{p}$; this gives existence of the intrinsic conditional expectation $\eecondgp{X_t}{\gggt}{p}$, for every $1\leq p<\infty$.

		For the remainder of this proof, $p$ will be equal to $2$.  Equation ~\eqref{eq_Mark_reg} will be established by an uncountable strong induction, indexed by the totally ordered set $(\rr,\leq)$.  By the definitions of $X_t^g$ and $\eecondgp{%
			X_t
		}{\gggg_0}{p}$%
		if follows that
		$$
		X_0^g = \eecondg{X_0%
		}{\gggg_0} = Z_0.
		$$
		Since $\exTflex{X^g}{0}_t=X_0^g$ and $\eecondg{\exTflex{X}{0}_t%
		}{\gggg_t} = \eecondg{X_0}{\gggg_0}$ for every $t\leq 0$, the base case of the (uncountable) strong induction hypothesis is established.

		Suppose that for every $t\leq T$, $X_t^g = \exTflex{\eecondgp{X_t}{\gggt}{p}}{t}$.  It follows from the $\Gamma$-convergence of $F_n$ to $F$, that
		\footnotesize\begin{align}
		\label{eq_LHRHGamma}
		\min_{Z_{\bigcdot}\in \LLp{\gggg_{\bigcdot}}{p}} 
		\int_{t=0}^{T}
		\eep{
			D_g^p(Z_t,\exTflex{X}{T}_t)
		}
		dt
		= &
		\min_{Z_{\bigcdot}\in \LLp{\gggg_{\bigcdot}}{p}} 
		\int_{t \in \rr}
		\eep{
			D_g^p(Z_t,\exTflex{X}{T}_t)
		}
		dt
		\\
		=&
		\nonumber
		\lim\limits_{n \mapsto \infty} \inf_{Z_{\bigcdot} \in \LLp{\gggg_{\bigcdot}}{p}}
		\int_{t \in \rr}
		\eep{
			\left\|
			\LOGps{Z_{t-\frac1{n}}}{Z_{t}}
			- 
			\LOGps{Z_{t-\frac1{n}}}{\exTflex{X}{T}_t}
			\right\|_2^p
		}dt
		\\
		\nonumber
		=&\lim\limits_{n \mapsto \infty} \inf_{Z_{\bigcdot} \in \LLp{\gggg_{\bigcdot}}{p}}
		\int_{t=0}^{T}
		\eep{
			\left\|
			\LOGps{Z_{t-\frac1{n}}}{Z_{t}}
			- 
			\LOGps{Z_{t-\frac1{n}}}{\exTflex{X}{T}_t}
			\right\|_2^p
		}dt
		.
		\end{align}\normalsize
		Here the fact that $\exTflex{X}{T}_t$ is identical above $T$ and below $0$ was used.  
		The non-negativity of the integrands on of both sides of equation~\eqref{eq_LHRHGamma} and the monotonicity of integration implies that the LHS of equation~\eqref{eq_LHRHGamma} must minimize $\ee_{\pp}\left[D_g^p(Z_t,\exTflex{X}{T}_t)\right]$ for $m$-a.e. value of $t$ between $0$ and the current time $T$.  Therefore by the definition of intrinsic conditional expectation, the left-hand side of equation~\eqref{eq_LHRHGamma} is minimized by the eternal process
		\begin{equation}
		\exTflex{\eecondg{\exTflex{X}{T}_t}{\gggt}}{T}
		\label{eq_explode_optim}
		.
		\end{equation}
		
		Likewise, the right-hand side of equation~\eqref{eq_LHRHGamma} is minimized by the minimizers of
		$$
		\eep{
			\left\|
			\LOGps{Z_{t-\frac1{n}}}{Z_{t}}
			- 
			\LOGps{Z_{t-\frac1{n}}}{\exTflex{X}{T}_t}
			\right\|_2^p
		}.
		$$
		Since $t - \frac1{n}<t$, the induction hypothesis may be, applied hence 
		\begin{equation}
		Z_{t-\frac{1}{n}}=\exTflex{X^g}{t}_{t- \frac1{n}} = \eecondg{\exTflex{X}{t}_{t - \frac1{n}}}{\gggg_{t - \frac1{n}}}
		\label{eq_ind_hyp}
		.
		\end{equation}
		Equation~\eqref{eq_ind_hyp} implies that $\LOGps{\exT{X^g}_{t-\frac1{n}}}{\exT{X}_t}$ no longer enters into the optimization as a variable.  The correspondence between $\LLp{\gggg_{\bigcdot}}{p}$ and $\mathbb{L}_{\pp}^p(\gggg_{\bigcdot};\rrd)$ defined by the map $\LOGps{\exT{X}_{t-\frac1{n}}}{}$ gives
		\small\begin{align}
		\nonumber
		& \inf_{Z_{\bigcdot} \in \LLp{\gggg_{\bigcdot}}{2}}
		\eep{
			\left\|
			\LOGps{\exT{X}_{t-\frac1{n}}}{Z_{t}}
			- 
			\LOGps{\exT{X}_{t-\frac1{n}}}{\exT{X}_t}
			\right\|_2^p
		}\\
		=&
		\inf_{Z_{\bigcdot} \in \mathbb{L}_{\pp}^p(\gggg_{\bigcdot};\rrd)}
		\eep{
			\left\|
			\LOGps{\exT{X}_{t-\frac1{n}}}{Z_{t}}
			- 
			\LOGps{\exT{X}_{t-\frac1{n}}}{\exT{X}_t}
			\right\|_2^p
		}\\
		=& \eecond{\LOGps{\exT{X}_{t-\frac1{n}}}{\exT{X}_t}}{\gggt},
		\end{align}\normalsize
		where the least-squares property of the $L^2$-formulation of conditional expectation (see \citep[page 80]{kallenberg2006foundations}) was used.  Since the Riemannian Logarithm is a diffeomorphism, the change of variables may be undone.  Hence %
		\begin{equation}
		\begin{aligned}
		&\arginf{Z_{\bigcdot} \in \LLp{\gggg_{\bigcdot}}{2}}
		\eep{
			\left\|
			\LOGps{\exTflex{X^g}{t}_{t-\frac1{n}}}{Z_{t}}
			- 
			\LOGps{\exTflex{X^g}{t}_{t-\frac1{n}}}{\exTflex{X}{t}_t}
			\right\|_2^p
		}\\
		=&
		\EXPps{\exTflex{X^g}{t}_{t+\frac1{n}}}{\arginf{\tilde{Z}_{\bigcdot} \in \mathbb{L}_{\pp}^p(\gggg_{\bigcdot};\rrd)}
			\eep{
				\left\|
				\tilde{Z}_t
				- 
				\LOGps{\exTflex{X^g}{t}_{t-\frac1{n}}}{\exTflex{X}{t}_t}
				\right\|_2^p
		}}	
		\\
		=& \EXPps{\exTflex{X^g}{t}_{t+\frac1{n}}}{\arginf{\tilde{Z}_{t} \in L_{\pp}^p(\gggt;\rrd)}
			\eep{
				\left\|
				\tilde{Z}_t
				- 
				\LOGps{\exTflex{X^g}{t}_{t-\frac1{n}}}{\exT{X}{t}_t}
				\right\|_2^p
		}}	
		\\
		=&
		\EXPps{\exTflex{X^g}{t}_{t+\frac1{n}}}{\eecond{\LOGps{\exTflex{X^g}{t}_{t-\frac1{n}}}{\exTflex{X}{t}_t}}{\gggt}}	
		.
		\label{eq_simplif}
		\end{aligned}
		\end{equation}
		\normalsize
		Recombining equations~\eqref{eq_LHRHGamma},~\eqref{eq_explode_optim}, and~\eqref{eq_simplif} yields
		\begin{equation}
		\exTflex{\eecondg{\exTflex{X}{T}_T}{\gggt}}{T}
		=
		\exTflex{\lim\limits_{n \mapsto \infty}%
			\EXPps{\exTflex{X^g}{t}_{T-\frac1{n}}}{\eecond{\LOGps{\exTflex{X^g}{T}_{T-\frac1{n}}}{\exTflex{X}{T}_T}}{\gggt}}	}{T}
		= \exTflex{X^g}{T}_T
		\label{eq_Main_1}
		.
		\end{equation}
		Assumption~\ref{ass_finite} implies that $D_g$, $\LOGps{}{}$, $\EXPps{}{}$ reduce to their usual counterparts.  This completes the induction and 
		establishes Theorem~\ref{thrm_main}.  
	\end{proof}
	The proof of Theorem~\ref{thrm_main} showed how passing through the larger space $\LLp{\gggg_{\cdot}}{2}$ conclusions about the smaller $L^2_{\pp}(\gggt,\mmm)$ spaces could be made.  
\end{appendices}

\end{document}